\newtheorem{thm}{Theorem}[section]
\newtheorem{prop}[thm]{Proposition}
\newcommand{\pair}[2]{\langle #1\;;\;#2 \rangle}
\newcommand{\R}{\mathbb R}
\newcommand{\N}{\mathbb N}
\newcommand{\so}{\mathfrak{so}}
\newcommand{\pb}{\{\,\cdot\,,\,\cdot\,\}}
\newcommand{\cmm}{[\,\cdot\,,\,\cdot\,]}
\renewcommand{\to}{\rightarrow}
\DeclareMathOperator{\Tr}{Tr}
\DeclareMathOperator{\ad}{ad}
\renewcommand{\S}{\mathcal S}
\newcommand{\A}{\mathcal{A}}
\newcommand{\1}{\mathbbm 1}
\newcommand{\0}{\mathbbm O}
\newcommand{\bra}[1]{\left\langle #1 \right|}
\newcommand{\ket}[1]{\left| #1 \right\rangle}
\newcommand{\Aa}{\A_{a_1,\ldots,a_{n-1}}}
\newcommand{\Sa}{\S_{a_1,\ldots,a_{n-1}}}
\newcommand{\be}{\begin{equation}}
\newcommand{\ee}{\end{equation}}
\newcommand{\ben}{\begin{enumerate}}
\newcommand{\een}{\end{enumerate}}
\newcommand{\bit}{\begin{itemize}}
\newcommand{\eit}{\end{itemize}}
\begin{document}
\title[Lie bundle on the space of deformed$\ldots$]{Lie bundle on the space of deformed skew-symmetric matrices}
\author{Alina Dobrogowska, Tomasz Goli\'nski}
\dedicatory{University in Bia{\l}ystok\\
Institute of Mathematics\\
Lipowa 41, 15-424 Bia{\l}ystok, Poland\\
email: alaryzko@alpha.uwb.edu.pl, tomaszg@alpha.uwb.edu.pl
}
\keywords{Lie bundle, bihamiltonian structure, Lie algebra contractions, integrable systems}

\begin{abstract}
We study a Lie algebra $\Aa$ of deformed skew-symmetric $n\times n$ matrices endowed with a Lie bracket given by a choice of deformed symmetric matrix. The deformations are parametrized by a sequence of real numbers $a_1,\ldots,a_{n-1}$. Using isomorphism $(\Aa)^*\cong L_+$ we introduce a Lie-Poisson structure on the space of upper-triangular matrices $L_+$. In this way we generate hierarchies of Hamilton systems with bihamiltonian structure. 
\end{abstract}

\maketitle
\tableofcontents

\section{Introduction}

A \emph{Lie bundle} on a finite dimensional vector space $V$ (also called by some authors a \emph{linear bundle of Lie algebras}) is a family of compatible Lie structures on $V$, i.e. a family of Lie brackets $\cmm_S:V\times V\to V$ defining Lie algebra structure on $V$ depending linearly on the parameter $S$ belonging to some real vector space $W$
\be \label{Lb-lin}\cmm_{aS+bT}=a\cmm_{S}+b\cmm_{T}\qquad S,T\in W,\; a,b\in\R.\ee
The notion was studied and applied mostly in the context of integrable systems, see \cite{yan,fomenko}.

Given a Lie bundle structure on $V$ there exists a family of Lie--Poisson brackets $\pb_S$ on dual space $V^*$
\be \{f,g\}_S(x)=\pair{x}{[Df(x),Dg(x)]_S},\qquad x\in V^*, Df(x), Dg(x)\in (V^*)^*\cong V.\ee
The condition \eqref{Lb-lin} ensures that these Poisson brackets are compatible since the following condition holds
\be \pb_{aS+bT}=a\pb_S+b\pb_T,\qquad S,T\in W, a,b\in\R.\ee

The simplest but still interesting case is when $W$ is two dimensional. In that case we have 
a one-parameter family of Poisson brackets (up to the constant factor) called a \emph{Poisson pencil}. It is well known that given that 
Poisson brackets from that family are not proportional, Casimir functions for 
one of them are in involution with respect to any Poisson bracket from the 
pencil, see \cite{bols-bor}. It is a version of Magri method \cite{magri} and it 
allows to construct bihamiltonian systems which are usually much easier to 
integrate due to the presence of many additional integrals of motion.

A classical example of such a structure can be obtained by considering $V$ as a space of skew-symmetric $n\times n$ matrices $\so(n)$. The natural pairing $\pair\rho X=-\Tr \rho X$, $\rho\in \so(n)^*$, $X\in \so(n)$, identifies dual space $\so(n)^*$ with $\so(n)$. On that space one considers a family of Lie brackets given by
\be \label{ex-lb}[X,Y]_S=XSY-YSX,\ee
where $S$ is a symmetric matrix, see e.g. \cite{Morosi,yan,gerdjikov,borisov-mamaev}. We will denote these Lie algebras as $\so_S(n)$. In case $S=\1$ we obtain the usual $\so(n)$. One can also consider \eqref{ex-lb} as a Lie bracket on the space of symmetric matrices by taking $S$ to be an antisymmetric matrix, see e.g. \cite{Ratiu-Bloch}.

In this paper, following \cite{DO-L2}, we will replace the Lie algebra $\so(n)$ with a set $\Aa$ of deformed skew-symmetric matrices (see \eqref{Aa}) given by a finite sequence of real parameters $a_1, \ldots, a_{n-1}$. Moreover by $\Sa$ we will denote a set of deformed symmetric matrices (see \eqref{Sa}). Both these sets are vector spaces and on $\Aa$ we define a family of Lie brackets by the formula \eqref{ex-lb} for $S\in\Sa$. In this way we obtain a Lie bundle on $\Aa$. The dual space $(\Aa)^*$ will be identified with a set of upper-triangular matrices $L_+$.

The Lie bundle obtained in this way can be decomposed into simpler bundles using classification theorem by I.L. Cantor and D.E. Persits, see \cite{cantor-persits,fomenko,yan}. Some results in this direction are presented in Section~3. However we are more interested in arising Hamilton equations than in the structure of the bundle itself.

Let us note that Lie algebra $\Aa$ in a generic case (all $a_i\neq0$ and $\det S\neq 0$) is isomorphic to $\so(p,q)$ (see Proposition \ref{prop:so_pq}). However in particular cases we obtain contractions of Lie algebras \cite{wigner-contr}. 
For example if we put $a_1=0$, $a_2=\ldots=a_{n-1}=1$, we obtain Lie algebra $\mathfrak e(n-1)$ of Euclidean group with Lie bracket given as a commutator modified by a matrix $S$. More examples of integrable systems related to Lie algebra $\Aa$ with standard commutator were considered in \cite{alina-relativistic,alina-ratiu,alina-sigma}.

Section~2 introduces the Lie algebra $\Aa$ of deformed skew symmetric $n\times n$ matrices and the vector space of deformed symmetric matrices $\Sa$. They are both defined by a fixed sequence of parameters $a_1,\ldots, a_{n-1}$. We have proven that the formula \eqref{ex-lb} defines a Lie bracket on $\Aa$ given that $S\in \Sa$. We have also described $\Aa$ and $\Sa$ in block form and proven several useful facts about these sets.

In Section~3 we describe certain isomorphisms between Lie algebras $\Aa$ for various choices of parameters $a_1,\ldots, a_{n-1}$ and $S\in\Sa$. In particular isomorphisms with the case $a_i=\pm1$, $i=1,\ldots,n-1$, are established.

Section~4 deals with the Lie--Poisson structure of the dual space to $\Aa$. We chose to represent $(\Aa)^*$ as the space of upper-triangular matrices $L_+$ with pairing given by trace. We also present the formula for Casimirs for that Poisson space in the generic case. By considering a Poisson pencil we are able to introduce a hierarchy of Hamilton equations.

In Section~5 we present some examples of Hamilton systems in the case $n=4$. Explicit formulas for the Poisson bracket and the Casimirs are given. Using vector notation we are able to present certain Hamilton equation in a compact form both in the generic form and in one of the  contractions. We also observe their relationship with classical integrable systems such as Clebsch system or multidimensional rigid body.

\section{Lie algebra parametrized by a finite sequence of numbers}
For a fixed sequence of real numbers $a_1,\ldots, a_{n-1}$ we define the following sets of matrices
\be\label{Aa}\Aa:=\{X=(x_{ij})\in Mat_{n\times n}(\R)\;|\;
x_{ij}=-a_i\cdots a_{j-1}x_{ji}\textrm{ for }j>i, x_{ii}=0\}\ee
and
\be\label{Sa}\Sa:=\{S=(s_{ij})\in Mat_{n\times n}(\R)\;|\;
s_{ij}=a_i\cdots a_{j-1}s_{ji}\textrm{ for }j>i\}.\ee
Namely these sets contain matrices of the following forms
\begin{equation}
X=\left(
\begin{array}{cccccc}
0       & -a_{1}x_{21} & -a_{1}a_2x_{31} &  -a_{1}a_{2}a_{3}x_{41} &\ldots& -a_1a_2\cdots a_{n-1}x_{n1} \\
x_{21}  & 0            & -a_{2}x_{32} & -a_{2}a_{3}x_{42} & \ldots & -a_2a_3\cdots a_{n-1}x_{n2}\\
x_{31}  & x_{32}      & 0        & -a_{3}x_{43}    & \ldots & -a_3a_4\cdots a_{n-1}x_{n3}\\
x_{41}  & x_{42}      & x_{43}   &  0    & \ldots & -a_4a_5\cdots a_{n-1}x_{n4}\\
\vdots  & \vdots       & \vdots  &\vdots     & \ddots& \vdots\\
x_{n1} & x_{n2} 		&x_{n3}	 & x_{n4}	& \ldots	& 0
\end{array}\right),
\end{equation}
\begin{equation}
S=\left(
\begin{array}{cccccc}
s_{11}       & a_{1}s_{21} & a_{1}a_2s_{31} &  a_{1}a_{2}a_{3}s_{41} &\ldots& a_1a_2\cdots a_{n-1}s_{n1} \\
s_{21}  & s_{22}            & a_{2}s_{32} & a_{2}a_{3}s_{42} & \ldots & a_2a_3\cdots a_{n-1}s_{n2}\\
s_{31}  & s_{32}      & s_{33}        & a_{3}s_{43}    & \ldots & a_3a_4\cdots a_{n-1}s_{n3}\\
s_{41}  & s_{42}      & s_{43}   &  s_{44}    & \ldots & a_4a_5\cdots a_{n-1}s_{n4}\\
\vdots  & \vdots       & \vdots  &\vdots     & \ddots& \vdots\\
s_{n1} & s_{n2} 		&s_{n3}	 & s_{n4}	& \ldots	& s_{nn}
\end{array}\right).
\end{equation}

The sets $\Aa$ and $\Sa$ can be viewed as deformations of the sets of antisymmetric and symmetric matrices respectively. They were introduced in \cite{DO-L2} and it was shown that $\Aa$ is a Lie algebra with respect to the standard commutator. In the generic case this deformation is trivial but special cases give rise to Lie algebra contractions. In this paper however we are going to investigate its Lie algebra structure with respect to a deformed Lie bracket of the form
\begin{equation}\label{Sbracket}
[X,Y]_S:=XSY-YSX
\end{equation}
for $X,Y\in\Aa$ and a fixed $S\in\Sa$. By considering $\Aa$ with a family of Lie brackets $\cmm_S$ for $S\in\Sa$ we will obtain a Lie bundle.

Before we prove that \eqref{Sbracket} defines a Lie algebra structure on $\Aa$, let us 
describe in a more direct way the sets $\Aa$ and $\Sa$.
If all parameters $a_1,\ldots, a_{n-1}$ are non-zero it is easy to observe that
\begin{equation}\label{delta-X}
X\in\Aa \Longleftrightarrow  \delta X+X^T\delta=0,\end{equation}
\begin{equation}\label{delta-S}
S\in\Sa \Longleftrightarrow  \delta S-S^T\delta=0,\end{equation}
where $\delta$ is the following diagonal matrix (in the canonical basis $\{\ket i\}_{i=1}^n$)
\be\delta:=
\ket 1\bra 1+ \sum_{i=2}^{n} a_{1}a_{2}\cdots a_{i-1} \ket i \bra i=
\left(
\begin{array}{cccccc}
1 & 0 & 0 & \ldots & 0 \\
0 & a_1 & 0 & \ldots & 0\\
0 & 0 & a_1a_2&\ldots & 0 \\
\vdots& \vdots& \vdots & \ddots & \vdots\\
0 & 0 &0 & \ldots & a_1a_2\cdots a_{n-1}
\end{array}\right).
\ee

In order to investigate the situation when one or more parameters is equal to zero, we introduce the following notation. Let $P_l$ be a projector onto $l+1,\ldots, n$ vectors of the basis:
\be P_l=\sum_{i=l+1}^n \ket i \bra i\ee
and by $\delta_{l}$ we will denote the matrix
\be \delta_{l}:=\ket {l+1}\bra {l+1}+ \sum_{i=l+2}^{n} a_{l+1}a_{l+2}\cdots a_{i-1} \ket i \bra i,\ee
which has the following block form
\be\delta_{l}=\left(
\begin{array}{c|ccccc}
\0 &  && \0 &&  \\
\hline
 & 1 & 0 & 0 & \ldots & 0\\
 & 0 & a_{l+1}& 0 &\ldots & 0 \\
\0 & 0& 0 & a_{l+1}a_{l+2}& \ldots&0\\
&\vdots&\vdots&\vdots&\ddots& \vdots\\
&0 & 0 & 0 &\ldots&a_{l+1}a_{l+2}\cdots a_{n-1}
\end{array}\right).\ee

Now if one of the parameters is equal to zero $a_k=0$ it is easy to calculate that condition \eqref{delta-X} should be replaced by the following two conditions:
\be \label{delta-X2}
\left\{\begin{array}{l}
\delta X + X^T\delta=0,\\
\delta_{k} X P_k + P_k X^T \delta_{k}=0.
\end{array}\right.\ee
In the block form this condition means that $X\in\Aa$ is of the following form
\be \label{block-tr}
X=\left(
\begin{array}{c|c}
A &  \0   \\
\hline
B & C \\
\end{array}\right),\ee
where $A\in \A_{a_1,\ldots,a_{k-1}}$, $C\in \A_{a_{k+1},\ldots,a_{n-1}}$ and $B$ is an arbitrary matrix.

Let us now consider the most general case with $N$ separate parameters equal to zero, namely
\be a_{k_1}=\ldots=a_{k_N}=0,\ee
where the sequence $(k_1,\ldots, k_N)$ is chosen to be strictly increasing. For consistency we put additionally $k_0=0$ and $k_{N+1}=n$. 
Then the condition for matrix $X$ to belong to $\Aa$ takes the form of the following system of equations:
\be\label{delta-XX}\delta_{k_i} X P_{k_i} + P_{k_i} X^T \delta_{k_i}=0,\qquad i=0,1,2,\ldots,N.\ee
By analogy condition for a matrix $S$ to belong to $\Sa$ assumes the form
\be\label{delta-SS}\delta_{k_i} S P_{k_i} - P_{k_i} S^T \delta_{k_i}=0,\qquad i=0,1,2,\ldots,N.\ee

Note that in this notation $\delta_0\equiv \delta_{k_0}\equiv \delta$, $P_0\equiv P_{k_0}\equiv \1$, 
and $\delta_{k_{N+1}}=\delta_n=P_{k_{N+1}}=P_n=0$. 
So conditions \eqref{delta-X} and \eqref{delta-S} are included in these systems of equations for $i=0$.

We obtain the following relations between introduced matrices:
\be\delta_{k_i} P_{k_j}=0, \qquad P_{k_i} P_{k_j}=P_{k_j}\ee
for $j>i$ and
\be\delta_{k_i} \delta_{k_j}=0\ee
for $i\neq j$. Moreover all $P_{k_i}$ and $\delta_{k_j}$ commute and
 $\delta_{k_i}$ are invertible when restricted to the range of projectors $P_{k_{i}}-P_{k_{i+1}}$. We will denote that pseudoinverse elements by $\iota(\delta_{k_i})$:
 \be \label{iota}\iota(\delta_{k_i})\delta_{k_i}=\delta_{k_i}\iota(\delta_{k_i})=P_{k_{i}}-P_{k_{i+1}}.\ee

From this we can establish the following generalization of the block representation \eqref{block-tr}:

\begin{prop}\label{rem-schodki}
For $X\in \Aa$ and $S\in \Sa$ the following relations hold
\be \label{schodki}(\1-P_{k_i}) X P_{k_i}=(\1-P_{k_i}) S P_{k_i}=0, \ee
\be \label{schodki2}P_{k_i} X P_{k_i} = XP_{k_i},\qquad P_{k_i} S P_{k_i} = SP_{k_i}\ee
for $i=0,\ldots, N$.
\end{prop}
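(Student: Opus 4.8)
The plan is to deduce everything from the defining systems of equations \eqref{delta-XX} and \eqref{delta-SS} together with the algebraic relations among the $P_{k_i}$ and $\delta_{k_i}$ recorded just before the statement. First I would focus on the matrix $X$ and prove the relation $(\1-P_{k_i})XP_{k_i}=0$ for a fixed $i$; the argument for $S$ is identical up to the sign in \eqref{delta-SS}, and \eqref{schodki2} is just a rewriting of \eqref{schodki}, since $(\1-P_{k_i})XP_{k_i}=0$ says exactly that $XP_{k_i}=P_{k_i}XP_{k_i}$.

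For the key relation, fix $i$ and use \eqref{delta-XX} with that same index: $\delta_{k_i}XP_{k_i}=-P_{k_i}X^T\delta_{k_i}$. Multiply this on the left by $\iota(\delta_{k_i})$ and use \eqref{iota}; this gives $(P_{k_i}-P_{k_{i+1}})XP_{k_i}=-\iota(\delta_{k_i})P_{k_i}X^T\delta_{k_i}$. Since $\iota(\delta_{k_i})$ is supported on the range of $P_{k_i}-P_{k_{i+1}}$, and this range lies inside the range of $P_{k_i}$, we have $P_{k_i}\iota(\delta_{k_i})=\iota(\delta_{k_i})$ while also $\delta_{k_i}$ is supported there, so $P_{k_i}X^T\delta_{k_i}=P_{k_i}X^T(P_{k_i}-P_{k_{i+1}})\delta_{k_i}$. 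Combining these observations, the right-hand side already lies in $P_{k_i}\,\mathrm{Mat}\,(P_{k_i}-P_{k_{i+1}})$, and more importantly the left-hand side $(P_{k_i}-P_{k_{i+1}})XP_{k_i}$ differs from $P_{k_i}XP_{k_i}$ only by the block $P_{k_{i+1}}XP_{k_i}$.

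So the real content is to show $P_{k_j}XP_{k_i}$ vanishes on the appropriate off-diagonal blocks, which I would handle by downward induction on the chain $P_{k_0}\supset P_{k_1}\supset\cdots\supset P_{k_{N+1}}=0$: one runs \eqref{delta-XX} for each index $j\ge i$ in turn, each time peeling off one more block below the diagonal. Concretely, \eqref{delta-XX} with index $j$ reads $\delta_{k_j}XP_{k_j}=-P_{k_j}X^T\delta_{k_j}$; sandwiching with $P_{k_j}$ on both sides and using $P_{k_j}\delta_{k_j}=\delta_{k_j}=\delta_{k_j}P_{k_j}$ shows that the block of $X$ strictly below the $j$-th diagonal block and to the right of column block $k_j$ must vanish, because on that block $\delta_{k_j}$ is invertible on one side but $P_{k_j}$ annihilates the transpose term on the other. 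Iterating over $j=i,i+1,\ldots,N$ (the base case $j=N$ being \eqref{block-tr} applied to the last parameter, or directly \eqref{delta-XX} with $i=N$ since $P_{k_{N+1}}=0$), all the blocks below the block-diagonal in the column strip $P_{k_i}$ are forced to be zero, which is precisely $(\1-P_{k_i})XP_{k_i}=0$.

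The main obstacle is purely bookkeeping: keeping track of which blocks are "below the diagonal relative to $k_i$" and verifying that the inductive step does not disturb blocks already shown to vanish. I would organize this by noting that \eqref{delta-XX} for index $j$ only constrains the block $(P_{k_j}-P_{k_{j+1}})\,\mathrm{-strip}$, so the equations for distinct $j$ act on disjoint sets of blocks and can be applied independently; then the full block-triangular shape \eqref{schodki} follows by taking $j$ over all values $\ge i$ and using the nesting $P_{k_{j+1}}\le P_{k_j}\le P_{k_i}$. Once \eqref{schodki} is in hand, \eqref{schodki2} is immediate: $XP_{k_i}=\1\cdot XP_{k_i}=(P_{k_i}+(\1-P_{k_i}))XP_{k_i}=P_{k_i}XP_{k_i}$, and likewise for $S$.
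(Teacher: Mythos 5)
Your general toolkit is the right one (the defining relations \eqref{delta-XX}, the pseudoinverses $\iota(\delta_{k_j})$, and telescoping the projector differences $P_{k_j}-P_{k_{j+1}}$), but the indices you propose to run over are the wrong ones, and this is not a cosmetic slip: the argument as described cannot reach the block you need to kill. The target $(\1-P_{k_i})XP_{k_i}$ lives in the rows $1,\ldots,k_i$, i.e.\ in the range of $\1-P_{k_i}=\sum_{j=0}^{i-1}(P_{k_j}-P_{k_{j+1}})$. The relation \eqref{delta-XX} with index $j$ only involves the entries $X_{rc}$ with $r$ in the strip $\{k_j+1,\ldots,k_{j+1}\}$ (both terms, since $\delta_{k_j}$ is supported exactly on that strip and the transpose term contributes the same entries of $X$). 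Hence the relations with $j\ge i$, which you propose to iterate over, constrain only rows $>k_i$ and say literally nothing about $(\1-P_{k_i})XP_{k_i}$; your opening step (multiplying the $i$-th relation by $\iota(\delta_{k_i})$ to get $(P_{k_i}-P_{k_{i+1}})XP_{k_i}$) is likewise aimed at a block inside $P_{k_i}XP_{k_i}$ rather than at the target. Already for $N=1$, $i=1$ this is visible: the relation with $j=1$ only forces the lower-right diagonal block of \eqref{block-tr} to lie in $\A_{a_{k_1+1},\ldots,a_{n-1}}$, while the vanishing of the upper-right block comes entirely from the $j=0$ relation.

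The correct (and shorter) route is the opposite one: for each $j<i$ take \eqref{delta-XX} with index $j$ and multiply it on the \emph{right} by $P_{k_i}$. Since $P_{k_j}P_{k_i}=P_{k_i}$ and $\delta_{k_j}P_{k_i}=0$ for $j<i$, the transpose term dies and you are left with $\delta_{k_j}XP_{k_i}=0$; multiplying on the left by $\iota(\delta_{k_j})$ gives $(P_{k_j}-P_{k_{j+1}})XP_{k_i}=0$, and summing over $j=0,\ldots,i-1$ telescopes to $(\1-P_{k_i})XP_{k_i}=0$. (Also note the block you are killing is the one \emph{above} the block diagonal in \eqref{block-schodki}, not below it.) Your reduction of \eqref{schodki2} to \eqref{schodki} and your remark that the $S$ case is identical up to a sign are both fine.
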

\begin{proof}We begin by proving this proposition for $X\in\Aa$. 
Let us take the relation \eqref{delta-XX} for $k_j$, $j<i$. If we multiply it from the right by $P_{k_i}$ we obtain
$$\delta_{k_j} X P_{k_i}=0.$$
By multiplying from the left by $\iota(\delta_{k_j})$ we obtain 
$$(P_{k_j}-P_{k_{j+1}})XP_{k_i}=0.$$
To obtain the equality \eqref{schodki} we sum up the above equalities from $j=0$ to $j=i-1$.  The second equality follows trivially.

The proof for $S\in\Sa$ is completely analogous.
\end{proof}

From \eqref{schodki} it follows that $X\in\Aa$ can be represented in the following block form
\be \label{block-schodki}
X=\left(
\begin{array}{c|c|c|c}
A_0 &  \0  & \cdots & \0 \\
\hline
* & A_1 &\cdots& \0 \\
\hline
\vdots & \vdots & \ddots & \vdots \\
\hline
* & * & \cdots& A_N \\
\end{array}\right),\ee
where $A_i\in \A_{a_{k_i+1},\ldots,a_{k_{i+1}-1}}$, $i=0,\ldots N$, and $*$ denotes arbitrary matrices of suitable sizes.

\begin{prop}For a fixed $S\in\Sa$ the set $\Aa$ is a Lie algebra with respect to the bracket \eqref{Sbracket}.
\end{prop}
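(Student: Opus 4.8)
The plan is to verify the three axioms of a Lie algebra for the bracket $[X,Y]_S = XSY - YSX$ on $\Aa$, with the closure property being the real content and bilinearity and antisymmetry being immediate from the formula. Bilinearity is obvious since matrix multiplication is bilinear, and antisymmetry $[X,Y]_S = -[Y,X]_S$ is visible by inspection. The Jacobi identity is a purely associative-algebra computation: for any associative product, the bracket $[X,Y] = XSY - YSX$ is nothing but the commutator bracket on the image of the ``$S$-twisted'' multiplication, or more directly one checks that $XSY-YSX$ corresponds to the ordinary commutator after the substitution $X \mapsto XS$ (since $(XS)(YS) - (YS)(XS) = (XSY-YSX)S = [X,Y]_S\, S$); hence Jacobi for $\cmm_S$ follows from Jacobi for the ordinary matrix commutator. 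So none of these three axioms is the obstacle.

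The genuine point to prove is that $[X,Y]_S \in \Aa$ whenever $X,Y\in\Aa$ and $S\in\Sa$. I would treat first the generic case where all $a_i\neq 0$, using the characterization \eqref{delta-X}--\eqref{delta-S}: $X\in\Aa \iff \delta X + X^T\delta = 0$ and $S\in\Sa\iff \delta S - S^T\delta = 0$. Setting $Z = XSY-YSX$, I compute $\delta Z + Z^T\delta = \delta X S Y - \delta Y S X + Y^T S^T X^T\delta - X^T S^T Y^T \delta$. Now substitute $\delta X = -X^T\delta$, $\delta Y = -Y^T\delta$, and then $\delta S = S^T\delta$ (equivalently $S^T\delta = \delta S$, and $\delta^{-1}$ exists so one can also write $S^T = \delta S\delta^{-1}$): inserting $\delta$ appropriately, $\delta X S Y = -X^T\delta S Y = -X^T S^T\delta Y = X^T S^T Y^T\delta$, which cancels the last term, and similarly $-\delta Y S X = Y^T S^T\delta X = -Y^T S^T X^T\delta$ cancels the third term. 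Hence $\delta Z + Z^T\delta = 0$, so $Z\in\Aa$.

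For the general case with some $a_{k_i}=0$, membership in $\Aa$ is characterized by the family of equations \eqref{delta-XX}, $\delta_{k_i} X P_{k_i} + P_{k_i} X^T\delta_{k_i} = 0$ for $i=0,\ldots,N$, and likewise \eqref{delta-SS} for $S$. Here I would use Proposition~\ref{rem-schodki}: from \eqref{schodki2} we have $S P_{k_i} = P_{k_i} S P_{k_i}$ and $X P_{k_i} = P_{k_i} X P_{k_i}$, so $Z P_{k_i} = X S Y P_{k_i} - Y S X P_{k_i} = X S P_{k_i} Y P_{k_i} - Y S P_{k_i} X P_{k_i}$, and inserting $P_{k_i}$ freely one reduces everything to the ``compressed'' matrices $P_{k_i} X P_{k_i}$, $P_{k_i} S P_{k_i}$, $P_{k_i} Y P_{k_i}$. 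Then the same cancellation as in the generic case, now carried out with $\delta_{k_i}$ in place of $\delta$ and using that $\delta_{k_i}$ is invertible on the range of $P_{k_i}-P_{k_{i+1}}$ via the pseudoinverse $\iota(\delta_{k_i})$ from \eqref{iota}, yields $\delta_{k_i} Z P_{k_i} + P_{k_i} Z^T \delta_{k_i} = 0$ for each $i$; hence $Z\in\Aa$. The main obstacle, and the step demanding the most care, is exactly this bookkeeping with the projectors $P_{k_i}$ and the partial inverses $\iota(\delta_{k_i})$: one must be sure that multiplying \eqref{delta-SS} on suitable sides by $\iota(\delta_{k_i})$ and $P_{k_i}$ correctly transports the relation $\delta S = S^T\delta$ to the block $P_{k_i}-P_{k_{i+1}}$, so that the substitution $S^T\delta_{k_i} \leftrightarrow \delta_{k_i} S$ is legitimate inside the compressed computation. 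Alternatively, and perhaps more cleanly, one can avoid the zero-parameter analysis of the $\delta$-relations entirely by invoking the block form \eqref{block-schodki}: writing $X$, $Y$, $S$ in the block-lower-triangular / block shape dictated by \eqref{block-schodki} and \eqref{schodki}, one checks that $XSY-YSX$ has the same block shape with each diagonal block $A_i$ of $Z$ equal to the $\cmm_{S_i}$-bracket of the corresponding diagonal blocks of $X$ and $Y$, where $S_i = P_{k_i}SP_{k_i}$ restricted appropriately lies in $\A_{a_{k_i+1},\ldots,a_{k_{i+1}-1}}$'s companion symmetric space; then closure follows by induction on $N$ from the generic case applied to each diagonal block. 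I would present the direct $\delta$-computation as the main argument and remark that \eqref{block-schodki} gives an alternative inductive proof.
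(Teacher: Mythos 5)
Your proof is correct and takes essentially the same route as the paper: the paper also treats closure as the only nontrivial point and verifies $\delta_{k_i}[X,Y]_S P_{k_i}+P_{k_i}[X,Y]_S^T\delta_{k_i}=0$ directly in the general case by inserting projectors via Proposition~\ref{rem-schodki} and using the defining relations \eqref{delta-XX}--\eqref{delta-SS}, exactly as in your main argument (your generic-case warm-up and the block-form induction are just expository additions). One small caution on the Jacobi identity: deducing it from the map $X\mapsto XS$ only gives $J\,S=0$ for the Jacobi sum $J$, which is inconclusive when $S$ is singular (a case the paper allows); the clean justification, which your ``$S$-twisted multiplication'' remark already contains, is that $X*Y:=XSY$ is an associative product and the commutator of any associative product satisfies Jacobi.
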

\begin{proof}The only non-trivial thing to check is
that $\Aa$ is closed under bracket \eqref{Sbracket}, namely if
$[X,Y]_S$ satisfies relations \eqref{delta-XX}
$$\delta_{k_i} (XSY-YSX) P_{k_i} + P_{k_i} (XSY-YSX)^T\delta_{k_i}=0$$
for $X,Y\in \Aa$, $S\in \Sa$ and $i=0,1,\ldots, N$.
Expanding left hand side we get
$$\delta_{k_i} XSYP_{k_i}-\delta_{k_i} YSXP_{k_i} + P_{k_i}Y^TS^TX^T\delta_{k_i}-P_{k_i}X^TS^TY^T\delta_{k_i}=0$$
Using Proposition \ref{rem-schodki} and the relations $P_{k_i}X^T\delta_{k_i}=-\delta_{k_i} XP_{k_i}$, $P_{k_i}Y^T\delta_{k_i}=-\delta YP_{k_i}$ and $P_{k_i}S^T\delta_{k_i}=\delta_{k_i} SP_{k_i}$ we see that
$$P_{k_i}Y^TS^TX^T\delta_{k_i}=P_{k_i}Y^TP_{k_i}S^TP_{k_i}X^T\delta_{k_i}= \delta_{k_i} Y P_{k_i} S P_{k_i} X P_{k_i} = \delta_{k_i}YSXP_{k_i}.$$
Performing the same computation in the other term we see that all terms in the relation cancel.

%

\end{proof}

Let us prove several basic properties which will be useful in the sequel.

\begin{prop}\label{prop:5}
If matrix $S\in\Sa$ or $X\in\Aa$ is invertible then its inverse also belongs to the same set, i.e. $S^{-1}\in\Sa$ or $X^{-1}\in\Aa$.
 \end{prop}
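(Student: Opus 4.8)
The plan is to use the characterizations \eqref{delta-XX} and \eqref{delta-SS} directly. Consider first the case when all parameters $a_i$ are non-zero, so the only relevant conditions are \eqref{delta-X} and \eqref{delta-S}, i.e. $\delta X + X^T\delta = 0$ and $\delta S - S^T\delta = 0$ with $\delta$ invertible. If $X$ is invertible, I would take inverses in $\delta X = -X^T\delta$ to get $X^{-1}\delta^{-1} = -\delta^{-1}(X^T)^{-1} = -\delta^{-1}(X^{-1})^T$; multiplying by $\delta$ on the left and $\delta$ on the right and using that $\delta$ is diagonal (so it commutes with nothing in general, but appears symmetrically) I rearrange to $\delta X^{-1} + (X^{-1})^T\delta = 0$, which is exactly \eqref{delta-X} for $X^{-1}$. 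The computation for $S$ is identical up to the sign, giving $\delta S^{-1} - (S^{-1})^T\delta = 0$. The key algebraic fact is simply that $(\delta M)^{-1} = M^{-1}\delta^{-1}$ and $(M^T)^{-1} = (M^{-1})^T$, together with $\delta^T = \delta$.

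For the general case with some $a_{k_i} = 0$, the subtlety is that $\delta = \delta_{k_0}$ is no longer invertible and the single equation is replaced by the system \eqref{delta-XX}, \eqref{delta-SS} for $i = 0,\ldots,N$. Here I would instead exploit the block-triangular structure from Proposition~\ref{rem-schodki}: by \eqref{block-schodki}, $X\in\Aa$ is block lower-triangular with diagonal blocks $A_i \in \A_{a_{k_i+1},\ldots,a_{k_{i+1}-1}}$, and these diagonal block algebras have all their deformation parameters non-zero. A block lower-triangular matrix is invertible iff each diagonal block $A_i$ is invertible, and its inverse is again block lower-triangular with diagonal blocks $A_i^{-1}$. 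By the non-degenerate case already handled, each $A_i^{-1}\in\A_{a_{k_i+1},\ldots,a_{k_{i+1}-1}}$; I then need to check that an arbitrary block lower-triangular matrix whose diagonal blocks lie in these algebras in fact lies in $\Aa$ — but that is precisely the content of \eqref{block-schodki} read in reverse, or can be verified directly against \eqref{delta-XX} using the orthogonality relations $\delta_{k_i}P_{k_j} = 0$ for $j > i$. The same argument, with ``symmetric'' in place of ``skew-symmetric'' and block \emph{upper}-triangular replaced appropriately (here $S$ has the transposed block pattern, with nonzero entries above and on the block diagonal by \eqref{schodki2}), handles $S\in\Sa$.

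The main obstacle is the bookkeeping in the degenerate case: one must be careful that \eqref{block-schodki} genuinely characterizes membership in $\Aa$ (not merely a necessary condition), so that reconstructing $X^{-1}$ block by block from invertible diagonal blocks and arbitrary off-diagonal blocks lands back in $\Aa$. I expect this to follow cleanly because \eqref{schodki}–\eqref{schodki2} were derived as equivalences from \eqref{delta-XX}, but it is the one place where a short verification is needed rather than pure formal inversion. Alternatively, one can avoid the block picture entirely and work with the system \eqref{delta-XX} directly, replacing $\delta_{k_i}^{-1}$ by the pseudoinverse $\iota(\delta_{k_i})$ from \eqref{iota}: restricting \eqref{delta-XX} to the range of $P_{k_i} - P_{k_{i+1}}$, where $\delta_{k_i}$ is genuinely invertible, and inverting there, one recovers the analogous relation for $X^{-1}$ on each such block; summing over $i$ reassembles the full conditions. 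Either route reduces the proposition to the elementary non-degenerate computation.
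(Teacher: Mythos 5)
Your argument is correct and follows essentially the same route as the paper: the non-degenerate case by formal inversion of $\delta X+X^T\delta=0$ (the paper multiplies $\delta S^{-1}-(S^{-1})^T\delta$ by $S$ instead of inverting directly, which is equivalent), and the degenerate case via the block-triangular decomposition \eqref{block-schodki}, where the paper writes out the $2\times2$ block inverse explicitly and recurses, while you invoke the general fact about inverses of block lower-triangular matrices in one step; the paper also supplies, as a second option, exactly the projector computation you sketch at the end (establishing \eqref{schodki}--\eqref{schodki2} for $S^{-1}$ and then multiplying the target identity by the invertible block $P_{k_i}SP_{k_i}$). One small correction: $S\in\Sa$ is \emph{not} block upper-triangular in the degenerate case --- Proposition~\ref{rem-schodki} asserts $(\1-P_{k_i})SP_{k_i}=0$ for $S$ as well as for $X$ (the off-diagonal entries $s_{ij}=a_i\cdots a_{j-1}s_{ji}$, $j>i$, vanish when some intermediate $a_k=0$), so $S$ has the \emph{same} block lower-triangular pattern as $X$ and your argument for $\Sa$ is literally identical to the one for $\Aa$, not a transposed variant.
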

\begin{proof}
Let us prove the first assertion in the case when all $a_i\neq 0$. It is equivalent to
$$ \delta S^{-1} - (S^T)^{-1}\delta=0.$$
By multiplying this condition from the right by $S$ and using \eqref{delta-S} we get
$$ (\delta S^{-1} - (S^T)^{-1}\delta)S=\delta- (S^T)^{-1}\delta S = \delta-\delta=0.$$

Now if we admit the case when one of the parameters $a_k$ vanishes, we can write $S$
in the block form
$$ S=\left(
\begin{array}{c|c}
\tilde A &  \0   \\
\hline
\tilde B & \tilde C \\
\end{array}\right),$$
where $\tilde A\in \S_{a_1,\ldots,a_{k-1}}$, $\tilde C\in \S_{a_{k+1},\ldots,a_{n-1}}$ and $\tilde B$ is an arbitrary matrix. If $S$ is invertible then so are $\tilde A$ and $\tilde C$ and we have
$$ S^{-1}=\left(
\begin{array}{c|c}
\tilde A^{-1} &  \0   \\
\hline
-\tilde C^{-1}\tilde B\tilde A^{-1} & \tilde C^{-1} \\
\end{array}\right).$$

We already proved that $\tilde A^{-1}\in \S_{a_1,\ldots,a_{k-1}}$ and $\tilde C^{-1}\in \S_{a_{k+1},\ldots,a_{n-1}}$. Thus we gather that $S^{-1}\in \Sa$.

The same argument works also in the case of more parameters vanishing, but for the sake of completeness let us present the exact calculation completing the proof in the most general case. 

First of all, let us notice that equalities \eqref{schodki}-\eqref{schodki2} are valid also for $S^{-1}$. To prove that let us decompose equality $SS^{-1}=S^{-1}S=\1$ with respect to a pair of complementary projectors $P_{k_i}$ and $\1-P_{k_i}$. Among others we obtain the following equalities
$$(\1-P_{k_i}) S^{-1} (\1-P_{k_i}) S (\1-P_{k_i})=\1-P_{k_i}$$
$$(\1-P_{k_i}) S (\1-P_{k_i}) S^{-1} P_{k_i}=0.$$
By multiplying the second one by $(\1-P_{k_i}) S^{-1}$ and using the first one, we get the desired result.

The proposition we are to prove is equivalent to the set of equalities 
$$ \delta_{k_i} S^{-1} P_{k_i}  = P_{k_i} (S^{-1})^T \delta_{k_i}.$$
By multiplying this condition from the right side by $P_{k_i} S P_{k_i}$
and taking into account equality \eqref{delta-SS} and that $P_{k_i}\delta_{k_i}=\delta_{k_i}$ we obtain
$$\delta_{k_i} P_{k_i} S^{-1}P_{k_i} S P_{k_i} =  (P_{k_i} S^{-1}P_{k_i})^T S^T\delta_{k_i}.$$
Taking into account relation \eqref{schodki} we obtain identity.
To conclude the proof we note that $P_{k_i} S P_{k_i}$ is invertible when restricted to the range of $P_{k_i}$ so we didn't lose generality by multiplying by it.

The claim that $X^{-1}\in\Aa$ follows accordingly.
\end{proof}

\begin{prop}\label{prop:2}
For $S\in\Sa$ and $X\in\Aa$ we have
\be \label{SXS}S(XS)^{2k-1}, X(SX)^{2k}\in\Aa,\ee
\be \label{XSX}X(SX)^{2k-1}, S(XS)^{2k}\in \Sa,\ee
for $k\in\N$.
\end{prop}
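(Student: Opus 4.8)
The plan is to prove both families of statements simultaneously by induction on $k$, using the two defining relations $\delta X + X^T\delta = 0$ and $\delta S - S^T\delta = 0$ (together with their full versions \eqref{delta-XX}–\eqref{delta-SS} and the consequences \eqref{schodki}–\eqref{schodki2} from Proposition \ref{rem-schodki}). First I would reduce everything to the observation that if $M\in\Aa$ and $N\in\Sa$ then $MN\cdot(\text{something in }\Aa)$ shifts parity in a controlled way; concretely, I claim that $X\mapsto XSX$ sends $\Aa$ to $\Aa$, that $S\mapsto SXS$ sends $\Sa$ to $\Sa$, and more generally that left-multiplication by $SX$ (resp. $XS$) toggles membership between the two spaces. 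Granting that, a one-step induction on $k$ handles all four expressions in \eqref{SXS}–\eqref{XSX}.

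The key computations, in order, are the following. First, the base cases: $SXS\in\Sa$ and $XSX\in\Aa$. For the case when all $a_i\neq 0$ this is immediate from the $\delta$-characterizations \eqref{delta-X}–\eqref{delta-S}: e.g. $\delta(XSX)+(XSX)^T\delta = \delta X SX + X^TS^TX^T\delta = \delta XSX + X^TS^T(-\delta X) = \delta XSX - X^T(\delta S)X = \delta XSX - X^T\delta SX = \delta XSX - \delta X S X = 0$, where I used \eqref{delta-X}, \eqref{delta-S} and $P$-relations as needed; similarly $\delta(SXS) - (SXS)^T\delta = 0$. Second, the inductive step: writing $W_k := X(SX)^{2k-1}$ and assuming $W_k\in\Sa$, one checks $X(SX)^{2k+1} = (XS)W_k(SX)$ and verifies directly (using that $S$ and $W_k$ both satisfy the $\Sa$-relation and $X$ satisfies the $\Aa$-relation) that a product of the shape $X \cdot (\text{symmetric}) \cdot (\text{symmetric}) \cdot (\text{symmetric}) \cdot X$ lands in $\Aa$; likewise the analogous shapes produce the other three assertions. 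Each such verification is a routine chain of substitutions of the two defining identities, exactly as in the base case above.

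Third — and this is where the real care is needed — one must handle the degenerate cases where some $a_{k_i}=0$, since then $\delta$ is not invertible and the clean one-line manipulations fail. Here I would instead work with the full system \eqref{delta-XX}–\eqref{delta-SS} indexed by $i=0,\dots,N$, and use Proposition \ref{rem-schodki}: the relations \eqref{schodki}–\eqref{schodki2} let one insert projectors $P_{k_i}$ freely and replace $P_{k_i}X^T\delta_{k_i}$ by $-\delta_{k_i}XP_{k_i}$, $P_{k_i}S^T\delta_{k_i}$ by $\delta_{k_i}SP_{k_i}$, exactly as in the proof that $\cmm_S$ closes on $\Aa$. So the degenerate verification is the non-degenerate one with every bare product sandwiched between $P_{k_i}$'s and every transpose-times-$\delta_{k_i}$ rewritten; the cancellations are identical in structure. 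The main obstacle is therefore bookkeeping rather than conceptual: making sure in the degenerate case that after inserting the projectors one actually has, at each slot, a factor of the exact form ($\delta_{k_i}\times\text{word}\times P_{k_i}$) to which Proposition \ref{rem-schodki} applies, and that the parity of the number of transposes is tracked correctly so the signs cancel. I expect the cleanest write-up to state the two auxiliary closure lemmas ($XSX\in\Aa$, $SXS\in\Sa$, plus the left-multiplication-by-$SX$ toggle) first, prove them in both the generic and degenerate cases, and then dispatch \eqref{SXS}–\eqref{XSX} by the trivial induction.
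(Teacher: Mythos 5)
There is a concrete error at the heart of your argument: your base cases are stated backwards, and they contradict the very proposition you are proving. For $k=1$ the statement \eqref{SXS} says $S(XS)^{1}=SXS\in\Aa$ and \eqref{XSX} says $X(SX)^{1}=XSX\in\Sa$, whereas you claim ``$SXS\in\Sa$ and $XSX\in\Aa$'' and build your whole induction on the (false) closure claims that $X\mapsto XSX$ preserves $\Aa$ and $S\mapsto SXS$ preserves $\Sa$. The mistake is traceable to a sign error in the last step of your displayed computation: from $\delta XSX - X^T\delta SX$ you must substitute $X^T\delta=-\delta X$, which gives $\delta XSX-(-\delta X)SX=2\,\delta XSX\neq 0$; you instead wrote $\delta XSX-\delta XSX=0$. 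The correct conclusion of that computation is $\delta(XSX)-(XSX)^T\delta=0$, i.e.\ $XSX\in\Sa$. The reliable bookkeeping is: when $\delta$ is pushed from the right of a transposed word to its left, each factor $X^T$ contributes a factor $-1$ (via $X^T\delta=-\delta X$) and each factor $S^T$ contributes $+1$ (via $S^T\delta=\delta S$), so a word $W$ in $X$ and $S$ satisfies $W^T\delta=(-1)^{m}\delta W$ where $m$ is the number of $X$-factors; hence $W\in\Aa$ for $m$ odd and $W\in\Sa$ for $m$ even. This is exactly the pattern of the four cases in \eqref{SXS}--\eqref{XSX} ($S(XS)^{2k-1}$ and $X(SX)^{2k}$ have an odd number of $X$'s, the other two an even number), and it is the opposite of the pattern you assert.

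Your overall scaffolding is otherwise sound and in fact more careful than the paper's own proof, which only carries out the single computation $\delta(SXS)+(SXS)^T\delta=0$ in the generic case $a_i\neq 0$ and declares the remaining cases analogous. Your factorization $X(SX)^{2k+1}=(XS)\,W_k\,(SX)$ for the induction, and your plan to redo the degenerate case by replacing \eqref{delta-X}--\eqref{delta-S} with the system \eqref{delta-XX}--\eqref{delta-SS} and inserting projectors via Proposition \ref{rem-schodki}, are both reasonable. But as written the proof establishes (or rather, purports to establish) statements that are false, so the core computations must be redone with the sign corrected and the target spaces swapped before the induction can be accepted.
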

\begin{proof}
For simplicity let us prove only one of these statements in the case $k=1$ and $a_i\neq 0$. Proof in a more general case is analogous.

The computation to check that $SXS\in \Aa$ is straightforward
$$ \delta(SXS)+(SXS)^T\delta=\delta SXS+S^TX^TS^T\delta=
\delta SXS+S^TX^T\delta S=$$
$$=\delta SXS-S^T\delta XS=\delta SXS - \delta SXS=0.$$
To check remaining statements we make completely analogous sequence of transformations.

\end{proof}

\begin{prop}\label{prop:3}
 For $X\in\Aa$ and $S\in\Sa$ we have
\be \Tr SX=0.\ee
\end{prop}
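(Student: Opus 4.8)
The plan is to compute $\Tr SX$ directly and show it vanishes by a symmetry argument.

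First I would treat the generic case where all $a_i\neq 0$. Here we have the characterizations \eqref{delta-X} and \eqref{delta-S}, namely $\delta X + X^T\delta = 0$ and $\delta S - S^T\delta = 0$, where $\delta$ is the invertible diagonal matrix introduced above. The idea is to insert $\delta^{-1}\delta$ into the trace and use cyclicity: $\Tr SX = \Tr(\delta^{-1}\delta S X) = \Tr(\delta S X \delta^{-1})$. Using \eqref{delta-S} in the form $\delta S = S^T\delta$ and \eqref{delta-X} in the form $\delta X = -X^T\delta$, I would rewrite $\delta S X \delta^{-1} = S^T \delta X \delta^{-1} = -S^T X^T \delta\delta^{-1} = -S^T X^T = -(XS)^T$. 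Therefore $\Tr SX = -\Tr (XS)^T = -\Tr XS = -\Tr SX$, which forces $\Tr SX = 0$.

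For the general case where some parameters $a_k$ vanish, I would use Proposition~\ref{rem-schodki}: both $X$ and $S$ are block lower-triangular with respect to the decomposition \eqref{block-schodki}, with diagonal blocks $A_i \in \A_{a_{k_i+1},\ldots,a_{k_{i+1}-1}}$ and correspondingly $S_i \in \S_{a_{k_i+1},\ldots,a_{k_{i+1}-1}}$ with all intermediate parameters nonzero. Since the trace of a product of block lower-triangular matrices only sees the product of the diagonal blocks, $\Tr SX = \sum_{i=0}^N \Tr S_i A_i$, and each summand vanishes by the generic case applied to the smaller blocks. Alternatively, one can argue entirely within the matrix formalism by noting that on the range of each $P_{k_i}-P_{k_{i+1}}$ the matrix $\delta_{k_i}$ is invertible with pseudoinverse $\iota(\delta_{k_i})$, and running the same cyclicity computation block by block using \eqref{delta-XX} and \eqref{delta-SS}.

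I do not anticipate a serious obstacle here; this is a short computation. The only mild subtlety is handling the degenerate case cleanly, but the block-triangular structure from Proposition~\ref{rem-schodki} reduces it immediately to the generic computation, so either phrasing works. I would present the generic computation in full and then remark that the general case follows by applying it to the diagonal blocks in \eqref{block-schodki}.
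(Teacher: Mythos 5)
Your proof is correct and follows essentially the same route as the paper: in the generic case the identity $\delta SX=-(XS)^T\delta$ combined with cyclicity of the trace gives $\Tr SX=-\Tr SX$, and the degenerate case is reduced to the generic one via the block-triangular decomposition of Proposition~\ref{rem-schodki}, pairing only the diagonal blocks. No gaps.
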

\begin{proof}
Note that for $a_i\neq 0$, from \eqref{delta-X}-\eqref{delta-S} it follows that
$$ \delta SX+(XS)^T\delta=0.$$
Thus by properties of trace we get
$$ \Tr SX= -\Tr\delta^{-1}(XS)^T\delta = -\Tr XS= - \Tr SX.$$

Now for the general case, we note that both matrix $S$ and $X$ have block representation with diagonal blocks belonging to lower-dimensional $\A$ and $\S$ spaces. To conclude the proof it is sufficient to observe that while calculating $\Tr SX$ only diagonal blocks (one from $\A$ and the other from $\S$) get paired together. Namely from Proposition \ref{rem-schodki} we obtain
$$\Tr SX = \sum_{i=0}^N\Tr (P_{k_{i}}-P_{k_{i+1}})S(P_{k_{i}}-P_{k_{i+1}})X(P_{k_{i}}-P_{k_{i+1}}).$$
From \eqref{delta-XX} and \eqref{delta-SS} (or from \eqref{block-schodki}) it follows that each term of that sum vanishes.
\end{proof}

%
%

\begin{prop}\label{prop:4}
For any matrix $A\in Mat_{n\times n}(\R)$ the following matrix
\be A-\sum_{i=0}^N\iota(\delta_{k_i})A^T\delta_{k_i}-\sum_{i=1}^N(P_{k_{i-1}}-P_{k_i})AP_{k_i}\ee
belongs to the Lie algebra $\Aa$.

In the case when all $a_i\neq 0$ then that expression simplifies to
\be A-\delta^{-1}A^T\delta\in \Aa.\ee
\end{prop}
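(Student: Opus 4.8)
The plan is to verify that the displayed matrix satisfies the defining system of equations \eqref{delta-XX}, i.e. that for each $i=0,1,\ldots,N$ the expression
$$Z:=A-\sum_{j=0}^N\iota(\delta_{k_j})A^T\delta_{k_j}-\sum_{j=1}^N(P_{k_{j-1}}-P_{k_j})AP_{k_j}$$
obeys $\delta_{k_i} Z P_{k_i} + P_{k_i} Z^T \delta_{k_i}=0$. First I would compute $\delta_{k_i} Z P_{k_i}$. When multiplying the first sum from the left by $\delta_{k_i}$, the relations $\delta_{k_i}\delta_{k_j}=0$ for $i\ne j$ and $\delta_{k_i}\iota(\delta_{k_i})=P_{k_i}-P_{k_{i+1}}$ collapse it to the single term $(P_{k_i}-P_{k_{i+1}})A^T\delta_{k_i}$; then $\delta_{k_i}(P_{k_i}-P_{k_{i+1}})=\delta_{k_i}$ (since $\delta_{k_i}P_{k_i}=\delta_{k_i}$ and $\delta_{k_i}P_{k_{i+1}}=0$) shows that this contribution to $\delta_{k_i}ZP_{k_i}$ is exactly $\delta_{k_i}A^T\delta_{k_i}P_{k_i}$. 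Since $\delta_{k_i}P_{k_i}=\delta_{k_i}$, this equals $\delta_{k_i}A^T\delta_{k_i}$. Meanwhile the third sum, multiplied from the left by $\delta_{k_i}$, uses $\delta_{k_i}(P_{k_{j-1}}-P_{k_j})$: this vanishes unless $i\le j-1$, and careful bookkeeping (together with right-multiplication by $P_{k_i}$, which via $P_{k_i}P_{k_j}$ retains terms only for $j\ge i$) should leave just a telescoping remnant. The upshot of the computation will be
$$\delta_{k_i} Z P_{k_i} = \delta_{k_i}AP_{k_i} - \delta_{k_i}A^T\delta_{k_i},$$
so that $\delta_{k_i}ZP_{k_i}$ is manifestly the difference of $\delta_{k_i}AP_{k_i}$ and a symmetric-type term.

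Next I would apply the transpose-and-$\delta$ operation to the same quantity. Transposing $Z$ and using that $\iota(\delta_{k_j})$, $\delta_{k_j}$, $P_{k_j}$ are symmetric (they are diagonal), one gets
$$P_{k_i}Z^T\delta_{k_i}=P_{k_i}A^T\delta_{k_i}-\sum_{j=0}^N P_{k_i}\delta_{k_j}A\iota(\delta_{k_j})\delta_{k_i}-\sum_{j=1}^N P_{k_i}P_{k_j}A^T(P_{k_{j-1}}-P_{k_j})\delta_{k_i}.$$
Again the orthogonality relations reduce each sum drastically: in the middle sum only $j=i$ survives (giving $P_{k_i}\delta_{k_i}A\iota(\delta_{k_i})\delta_{k_i}=\delta_{k_i}A(P_{k_i}-P_{k_{i+1}})=\delta_{k_i}AP_{k_i}$, using $\delta_{k_i}P_{k_{i+1}}=0$ and $P_{k_i}\delta_{k_i}=\delta_{k_i}$), and in the last sum the factor $P_{k_i}P_{k_j}$ forces $j\ge i$ while $(P_{k_{j-1}}-P_{k_j})\delta_{k_i}$ forces $j-1< i$ up to the $\delta_{k_i}P_{k_j}$ vanishing, so only the boundary index contributes and telescopes. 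The expected outcome is
$$P_{k_i}Z^T\delta_{k_i}= P_{k_i}A^T\delta_{k_i}-\delta_{k_i}AP_{k_i},$$
whence adding the two displays gives $\delta_{k_i}ZP_{k_i}+P_{k_i}Z^T\delta_{k_i}=\delta_{k_i}AP_{k_i}+P_{k_i}A^T\delta_{k_i}-\delta_{k_i}AP_{k_i}-P_{k_i}A^T\delta_{k_i}=0$, as required. The special case $N=0$ (all $a_i\ne0$) is immediate: there the formula reads $Z=A-\iota(\delta_0)A^T\delta_0=A-\delta^{-1}A^T\delta$ since $\iota(\delta)=\delta^{-1}$ and $P_0=\1$, and one checks $\delta Z+Z^T\delta=\delta A+A^T\delta-A^T\delta-\delta A=0$ directly, recovering \eqref{delta-X}.

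The main obstacle I anticipate is not any single identity but the combinatorics of the two index sums: keeping track of which summands survive after left-multiplication by $\delta_{k_i}$ and right-multiplication by $P_{k_i}$ (respectively right-multiplication by $\delta_{k_i}$ on the transpose) requires careful use of all four relations $\delta_{k_i}P_{k_j}=0$ for $j\ge i$(?), $\delta_{k_i}\delta_{k_j}=0$ for $i\ne j$, $P_{k_i}P_{k_j}=P_{k_{\max(i,j)}}$, and $\delta_{k_i}\iota(\delta_{k_i})=P_{k_i}-P_{k_{i+1}}$, together with the nesting $P_{k_0}\supseteq P_{k_1}\supseteq\cdots$; one must also remember $\delta_{k_i}P_{k_i}=P_{k_i}\delta_{k_i}=\delta_{k_i}$ and $\delta_{k_i}P_{k_{i+1}}=0$. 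An alternative, perhaps cleaner, route is to decompose $A$ blockwise according to the projectors $P_{k_{i-1}}-P_{k_i}$ as in \eqref{block-schodki}: then the subtracted sums are precisely the ``antisymmetrization of the block-diagonal part plus removal of the upper-triangular blocks'' of $A$, so $Z$ visibly has zero blocks above the diagonal and block-diagonal pieces of the form $A_{ii}-\iota(\delta_{k_i})A_{ii}^T\delta_{k_i}$, each of which lies in the corresponding lower-dimensional $\A_{a_{k_i+1},\ldots,a_{k_{i+1}-1}}$ by the already-established $a_i\ne0$ case; invoking \eqref{block-schodki} then finishes it. I would present the block-decomposition argument as the main proof and relegate the direct index-chase to a remark or omit it.
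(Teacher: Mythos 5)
Your designated main argument --- the block decomposition --- is correct and is a genuinely different route from the paper's. The paper proves the proposition by a direct verification of \eqref{delta-XX}: it computes $\delta_{k_l}XP_{k_l}$ and $P_{k_l}X^T\delta_{k_l}$ separately and observes that all terms cancel upon addition. Your block route instead observes that, since $\delta_{k_i}$ and $\iota(\delta_{k_i})$ are supported on the range of $P_{k_i}-P_{k_{i+1}}$, the first subtracted sum antisymmetrizes (in the $\delta_{k_i}$-sense) only the diagonal blocks of $A$, while the second subtracted sum is exactly the strictly upper block-triangular part of $A$; the result then follows from the $a_i\neq0$ case applied blockwise together with \eqref{block-schodki}. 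This is cleaner and more conceptual, and it explains \emph{why} the formula has the shape it does. One caveat: the paper only derives the forward direction of \eqref{block-schodki} (elements of $\Aa$ have that block form); your argument needs the converse, namely that any matrix with vanishing upper blocks, diagonal blocks in the respective $\A_{a_{k_i+1},\ldots,a_{k_{i+1}-1}}$, and arbitrary lower blocks satisfies \eqref{delta-XX}. This is true --- since $\delta_{k_i}$ annihilates all rows outside the $i$-th diagonal block, the condition \eqref{delta-XX} reduces to a condition on the $i$-th diagonal block alone once the upper blocks vanish --- but it deserves an explicit sentence.

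Your direct-verification sketch, by contrast, contains concrete errors in the claimed intermediate identities. Collapsing the first sum gives $\bigl(P_{k_i}-P_{k_{i+1}}\bigr)A^T\delta_{k_i}$, which you then silently replace by $\delta_{k_i}A^T\delta_{k_i}$; these are different matrices (the projector $P_{k_i}-P_{k_{i+1}}$ is not $\delta_{k_i}$ unless the relevant $a$'s equal $1$). The correct identities, which the paper obtains, are
\begin{equation*}
\delta_{k_i}ZP_{k_i}=\delta_{k_i}AP_{k_i}-\bigl(P_{k_i}-P_{k_{i+1}}\bigr)A^T\delta_{k_i}-\delta_{k_i}AP_{k_{i+1}},\qquad
P_{k_i}Z^T\delta_{k_i}=P_{k_i}A^T\delta_{k_i}-\delta_{k_i}A\bigl(P_{k_i}-P_{k_{i+1}}\bigr)-P_{k_{i+1}}A^T\delta_{k_i},
\end{equation*}
and these do cancel upon addition. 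Your two claimed displays, $\delta_{k_i}ZP_{k_i}=\delta_{k_i}AP_{k_i}-\delta_{k_i}A^T\delta_{k_i}$ and $P_{k_i}Z^T\delta_{k_i}=P_{k_i}A^T\delta_{k_i}-\delta_{k_i}AP_{k_i}$, do not even sum to zero (their sum is $P_{k_i}A^T\delta_{k_i}-\delta_{k_i}A^T\delta_{k_i}$), so the final cancellation you write is inconsistent with your own intermediate steps. If you keep the index-chase at all, redo it with the corrected terms; otherwise rely on the block argument, supplemented by the converse of \eqref{block-schodki}.
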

\begin{proof}
Let us denote the considered matrix by $X$. We have to check the conditions \eqref{delta-XX}. Direct calculation using properties of $\delta_{k_i}$ and $P_{k_i}$ yields
$$\delta_{k_l}X P_{k_l} = \delta_{k_l}A P_{k_l} - \delta_{k_l}\iota(\delta_{k_l})A^T\delta_{k_l}-\delta_{k_l}AP_{k_{l+1}},$$
$$P_{k_l}X^T\delta_{k_l}=P_{k_l}A^T\delta_{k_l}-\delta_{k_l}A\iota(\delta_{k_l})\delta_{k_l}-P_{k_{l+1}}A^T\delta_{k_l},$$
for any $l=0,1,\ldots,N$.
Taking into account \eqref{iota} and adding up both equalities we see that all terms cancel and thus \eqref{delta-XX} holds.
\end{proof}

Note that for $X\in\Aa$ and $a_i\neq 0$ this proposition is obvious due to the identity $X-\delta^{-1}X^T\delta=2X$.

\section{Isomorphisms of Lie algebras}

In this Section we present several results concerning isomorphisms between Lie algebras $(\Aa,\cmm_S)$ for various choices of parameters $a_1,\ldots,a_{n-1}$ and $S\in\Sa$. In particular in some cases we describe isomorphisms of $\Aa$ with $\so_S$ or $\so(p,q)$.

\begin{prop}\label{prop:iso-d}

If all $a_i\neq0$ then the mapping
\be \Aa\ni X \longmapsto \delta X \in \so_{S\delta^{-1}}(n)\ee
is an isomorphism of Lie algebras
\be (\Aa, \cmm_S) \cong (\so_{S\delta^{-1}}(n), \cmm_{S\delta^{-1}}).\ee
\end{prop}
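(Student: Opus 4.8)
The map $\phi\colon X\mapsto \delta X$ is clearly a linear bijection (since $\delta$ is invertible when all $a_i\neq 0$), so the work is to check that (i) $\delta X\in\so_{S\delta^{-1}}(n)$ whenever $X\in\Aa$, and (ii) $\phi$ intertwines the brackets $\cmm_S$ and $\cmm_{S\delta^{-1}}$. For (i), recall that $Y\in\so_T(n)$ just means $Y^T=-Y$, i.e.\ $Y$ is genuinely skew-symmetric (the subscript $T$ affects only the bracket, not the underlying vector space). So I must verify $(\delta X)^T=-\delta X$, which is exactly the defining relation \eqref{delta-X} for $X\in\Aa$: indeed $(\delta X)^T=X^T\delta^T=X^T\delta=-\delta X$, using that $\delta$ is symmetric (diagonal) and the characterization \eqref{delta-X}. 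This is immediate.

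For (ii), I compute directly. For $X,Y\in\Aa$,
\be
\phi([X,Y]_S)=\delta(XSY-YSX)=\delta X S Y-\delta Y S X,
\ee
while on the other side, writing $\tilde X=\delta X$, $\tilde Y=\delta Y$ and $T=S\delta^{-1}$,
\be
[\tilde X,\tilde Y]_T=\tilde X T\tilde Y-\tilde Y T\tilde X=\delta X\,(S\delta^{-1})\,\delta Y-\delta Y\,(S\delta^{-1})\,\delta X=\delta X S Y-\delta Y S X.
\ee
These agree, so $\phi$ is a Lie algebra homomorphism, hence an isomorphism. It remains only to note that $T=S\delta^{-1}$ is indeed a valid bracket parameter, i.e.\ $S\delta^{-1}$ is symmetric: $(S\delta^{-1})^T=\delta^{-1}S^T=\delta^{-1}(\delta S\delta^{-1})=S\delta^{-1}$, using $S^T\delta=\delta S$ from \eqref{delta-S} (equivalently $S^T=\delta S\delta^{-1}$). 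So $S\delta^{-1}$ is symmetric and $(\so_{S\delta^{-1}}(n),\cmm_{S\delta^{-1}})$ is a well-defined member of the classical $\so$-Lie-bundle.

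**Main obstacle.** Honestly there is no serious obstacle here: the whole proposition is a one-line conjugation once one unwinds the definitions, and the only points requiring care are bookkeeping ones — that the underlying space $\so_{S\delta^{-1}}(n)$ is literally the skew-symmetric matrices (so membership is the condition $(\delta X)^T=-\delta X$, nothing more), and that the new parameter $S\delta^{-1}$ lies in the symmetric matrices so the target bracket makes sense. Both follow from the symmetry of $\delta$ together with relations \eqref{delta-X}–\eqref{delta-S}. I would present the proof essentially as the two displayed computations above, preceded by the remark that $\phi$ is a linear isomorphism because $\delta$ is invertible, and followed by the verification that $S\delta^{-1}$ is symmetric.
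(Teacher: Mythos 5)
Your proposal is correct and follows essentially the same route as the paper: check membership via $(\delta X)^T=-\delta X$ using \eqref{delta-X}, verify $\delta[X,Y]_S=[\delta X,\delta Y]_{S\delta^{-1}}$ by direct computation, and note invertibility of $\delta$. The only addition is your explicit check that $S\delta^{-1}$ is symmetric, which the paper leaves implicit; this is a welcome but minor piece of bookkeeping.
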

\begin{proof}
 From condition \eqref{delta-X} it follows that $\delta X$ is an element of $\so_{S\delta^{-1}}(n)$. Direct calculation shows that
$$\delta [X,Y]_S= [\delta X, \delta Y]_{S\delta^{-1}}.$$
Obviously, since $\delta$ is an invertible matrix, the mapping is invertible.
\end{proof}

Observe that if the first parameter of deformation of Lie algebra $\Aa$ vanishes $a_1=0$, but all other does not, we can introduce the map
\be\A_{0,a_2,\ldots,a_{n-1}}
\ni X \longmapsto \frac12(\delta_1 X - X^T\delta_1)\in \so_{S\,\iota(\delta_1)}(n).\ee
It is again an isomorphism of Lie algebras
\be \label{iso:a1}(\A_{0,a_2,\ldots,a_{n-1}}, \cmm_S) \cong (\so_{S\,\iota(\delta_1)}(n), \cmm_{S\,\iota(\delta_1)}).\ee
Let us recall that $\iota(\delta_1)$ is pseudoinverse of $\delta_1$ defined by \eqref{iota}.

\begin{prop}\label{prop:iso-S}
If all parameters $a_i\neq 0$ then the following Lie algebras are isomorphic
\be (\Aa, \cmm_{C^T\delta SC\delta}) \cong (\Aa, \cmm_{S})\ee
for any invertible matrix $C$. The isomorphism is given by
\be \Aa\ni X\longmapsto C\delta X C^T\delta.\ee
\end{prop}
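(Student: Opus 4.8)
The plan is to check four things in turn, all by direct computation with the characterizations \eqref{delta-X}--\eqref{delta-S} (valid here since all $a_i\neq0$) together with $\delta^T=\delta$; recall that these say $X^T\delta=-\delta X$ for $X\in\Aa$ and $S^T\delta=\delta S$ for $S\in\Sa$.

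First I would confirm that the left-hand bracket is even well defined, i.e. that $T:=C^T\delta S C\delta$ lies in $\Sa$. Transposing and pushing $S^T$ past $\delta$ gives $T^T=\delta C^T\delta S C$, hence $T^T\delta=\delta T$, which is exactly condition \eqref{delta-S}. Next, writing $\phi(X):=C\delta X C^T\delta$, I would verify $\phi(X)\in\Aa$: since $\phi(X)^T=\delta C X^T\delta C^T$ and $X^T\delta=-\delta X$, one gets $\phi(X)^T\delta=-\delta\,\phi(X)$, i.e. \eqref{delta-X}. The third step, that $\phi$ intertwines the brackets, is then purely formal: expanding
\be \phi\big([X,Y]_{T}\big)=C\delta(XTY-YTX)C^T\delta \ee
as well as $[\phi(X),\phi(Y)]_S=\phi(X)S\phi(Y)-\phi(Y)S\phi(X)$ and using $T=C^T\delta S C\delta$, both sides become $C\delta X C^T\delta S C\delta Y C^T\delta-C\delta Y C^T\delta S C\delta X C^T\delta$ by associativity of matrix multiplication. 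Finally, $\phi$ is the composition of left multiplication by the invertible matrix $C\delta$ with right multiplication by the invertible matrix $C^T\delta$, hence a linear automorphism of $Mat_{n\times n}(\R)$ that by the second step restricts to an injective linear map $\Aa\to\Aa$; finite-dimensionality makes this restriction a linear isomorphism, and together with the third step it is an isomorphism of Lie algebras.

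I do not anticipate a genuine obstacle: the argument is a sequence of short manipulations, and the only place demanding care is the bookkeeping of transposes — in particular using the symmetry of $\delta$ and applying $X^T\delta=-\delta X$ and $S^T\delta=\delta S$ at exactly the right spots. Had some $a_i$ vanished — which is not the case claimed here — one would instead have to run all of this through the block decomposition \eqref{block-schodki} and the pseudoinverses $\iota(\delta_{k_i})$, which is where real care would be required; here the nondegeneracy makes $\delta$ invertible and everything is immediate.
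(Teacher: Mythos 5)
Your proof is correct and follows essentially the same route as the paper: verify membership via the conditions \eqref{delta-X}--\eqref{delta-S}, check the bracket identity by direct expansion, and use invertibility of $C\delta$ and $C^T\delta$ for bijectivity. The only addition is your explicit check that $C^T\delta SC\delta\in\Sa$, which the paper leaves implicit (it notes afterwards only that $\delta S$ is symmetric).
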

\begin{proof}
Due to the fact that both $C$ and $\delta$ are invertible, the considered map is an isomorphism of vector spaces.

From \eqref{delta-X} it follows that $C\delta X C^T\delta$ belongs to $\Aa$ and it is easy to check that
$$C\delta [X,Y]_{C^T\delta SC\delta} C^T\delta = [C\delta X C^T\delta, C\delta Y C^T\delta]_S.$$
\end{proof}

Note that since $\delta S$ is a symmetric matrix (see \eqref{delta-S}), then for proper choice of $C$, the matrix $C^T\delta SC\delta$ can be made diagonal.

If we admit some parameters equal to zero then the preceding proposition can be reformulated in the following way

\begin{prop}
The following Lie algebras are isomorphic
\be (\Aa, \cmm_{C^T\tilde\delta SC\tilde\delta}) \cong (\Aa, \cmm_{S}).\ee
The isomorphism is given by
\be \Aa\ni X\longmapsto C\tilde\delta X C^T\tilde\delta,\ee
where
\be \tilde\delta=\delta_{k_0}+\delta_{k_1}+\ldots+ \delta_{k_N}\ee
and $C$ is an invertible matrix block-diagonal with respect to the block decomposition \eqref{schodki}, i.e.
\be P_{k_i} C (\1-P_{k_i})=(\1-P_{k_i})C P_{k_i}=0, \qquad i=0,\ldots,N\ee

\end{prop}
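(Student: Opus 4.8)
The plan is to follow the template of the proof of Proposition~\ref{prop:iso-S}, replacing $\delta$ by $\tilde\delta=\delta_{k_0}+\ldots+\delta_{k_N}$ and restricting the conjugating matrix $C$ to be block-diagonal. First I would record the algebraic properties of $\tilde\delta$ that make the argument run. Since the $\delta_{k_i}$ have pairwise orthogonal ranges (indeed $\delta_{k_i}\delta_{k_j}=0$ for $i\neq j$) and $\delta_{k_i}$ is invertible on the range of $P_{k_i}-P_{k_{i+1}}$, the matrix $\tilde\delta$ is invertible on all of $\R^n$, with inverse $\iota(\tilde\delta)=\sum_{i=0}^N\iota(\delta_{k_i})$. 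Moreover $\tilde\delta$ commutes with all $P_{k_i}$, and the block-diagonality hypothesis on $C$ says exactly that $C$ commutes (in the weak sense of preserving the flag) with all $P_{k_i}$, so $C$, $C^T$, $\tilde\delta$, and the $P_{k_i}$ all interact nicely; in particular $C\tilde\delta X C^T\tilde\delta$ will again be block lower-triangular in the sense of \eqref{block-schodki}.

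Next I would verify that the map $X\mapsto C\tilde\delta XC^T\tilde\delta$ sends $\Aa$ into $\Aa$. For this I use the characterization \eqref{delta-XX}: I must check $\delta_{k_l}(C\tilde\delta XC^T\tilde\delta)P_{k_l}+P_{k_l}(C\tilde\delta XC^T\tilde\delta)^T\delta_{k_l}=0$ for each $l$. Because $C$ is block-diagonal, $\delta_{k_l}C=\delta_{k_l}CP_{k_l}'$ where $P_{k_l}'$ ranges over the appropriate diagonal block projector, and one can commute $\delta_{k_l}$ past $C$ up to such projectors; using $\delta_{k_i}\delta_{k_j}=\delta_{ij}\delta_{k_i}$ the only surviving contribution in $\delta_{k_l}\tilde\delta$ is $\delta_{k_l}$ times a projector, so effectively $\delta_{k_l}C\tilde\delta=\delta_{k_l}\widehat C\delta_{k_l}$ for a suitable block of $C$, and similarly on the transposed side, reducing the identity to \eqref{delta-XX} for $X$ itself after cancelling the conjugating factors. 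The transpose side uses $(C\tilde\delta)^T=\tilde\delta C^T$ together with symmetry of $\tilde\delta$ restricted to each block (which follows since each $\delta_{k_i}$ is diagonal, hence symmetric), so $P_{k_l}(C\tilde\delta XC^T\tilde\delta)^T\delta_{k_l}=P_{k_l}\tilde\delta C X^T C^T\tilde\delta\delta_{k_l}$, and the same collapsing of $\tilde\delta\delta_{k_l}$ to $\delta_{k_l}$ times a projector applies. Pairing the two pieces and invoking \eqref{delta-XX} for $X$ gives zero.

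Then comes the bracket identity: I would compute directly
\be
C\tilde\delta\,[X,Y]_{C^T\tilde\delta SC\tilde\delta}\,C^T\tilde\delta
= C\tilde\delta\bigl(XC^T\tilde\delta SC\tilde\delta Y - YC^T\tilde\delta SC\tilde\delta X\bigr)C^T\tilde\delta,
\ee
and rearrange by inserting, between each pair of adjacent factors, nothing at all — the point is that this already equals $(C\tilde\delta X C^T\tilde\delta)S(C\tilde\delta Y C^T\tilde\delta)-(C\tilde\delta Y C^T\tilde\delta)S(C\tilde\delta X C^T\tilde\delta)=[C\tilde\delta XC^T\tilde\delta,\,C\tilde\delta YC^T\tilde\delta]_S$. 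So the bracket identity is a pure bookkeeping of associativity, exactly as in Proposition~\ref{prop:iso-S}, and needs no hypotheses beyond invertibility. Finally, invertibility of the map as a linear isomorphism of $\Aa$ follows because $C$ and $\tilde\delta$ are both invertible; one should also note that $C^T\tilde\delta SC\tilde\delta\in\Sa$ (so the left-hand side makes sense), which follows by the analogue of the $\Aa$-argument above using \eqref{delta-SS} in place of \eqref{delta-XX}, or simply by noting that $X\mapsto C\tilde\delta XC^T\tilde\delta$ being a vector-space isomorphism of $Mat_{n\times n}$ carrying $\Aa$ to $\Aa$ bijectively forces the bracket it transports to be of the required form.

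I expect the main obstacle to be the careful commutation bookkeeping in the second step: one must be precise about how $\delta_{k_l}$ and $\tilde\delta$ slide past the block-diagonal $C$, keeping track of which projector $P_{k_i}-P_{k_{i+1}}$ each term lives on, since a naive computation produces many cross terms $\delta_{k_l}C\delta_{k_j}$ with $j\neq l$ that must be shown to vanish using the block-diagonality of $C$ and $\delta_{k_l}\delta_{k_j}=0$. Once the identity $\delta_{k_l}C\tilde\delta=\delta_{k_l}C$ (and its transpose) is established on the relevant block, the rest reduces verbatim to the non-degenerate case, so it is worth isolating that identity as the technical heart of the argument before assembling the proof.
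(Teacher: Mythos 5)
Your proposal is correct and matches the route the paper intends: the paper states this proposition without proof, as the degenerate-parameter analogue of Proposition~\ref{prop:iso-S}, and your argument --- reducing each condition \eqref{delta-XX} for $C\tilde\delta XC^T\tilde\delta$ to the corresponding condition for $X$ via the collapsing identity $\delta_{k_l}C\tilde\delta=\delta_{k_l}C\delta_{k_l}$ (using $\delta_{k_l}=\delta_{k_l}(P_{k_l}-P_{k_{l+1}})$, the commutation of the block-diagonal $C$ with the $P_{k_i}$, and $\delta_{k_i}\delta_{k_j}=0$ for $i\neq j$), with the bracket identity being pure associativity --- is exactly that adaptation. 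One small slip worth fixing: in your closing paragraph the key identity is misstated as $\delta_{k_l}C\tilde\delta=\delta_{k_l}C$, whereas the correct version, which you state earlier, is $\delta_{k_l}C\tilde\delta=\delta_{k_l}C\delta_{k_l}$.
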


In this case in general it is not possible to choose $C$ such that $C^T\tilde\delta SC\tilde\delta$ will be diagonal. However we can always make blocks on diagonal (see \eqref{block-schodki}, i.e. $(P_{k_i}-P_{k_{i+1}})C^T\tilde\delta SC\tilde\delta (P_{k_i}-P_{k_{i+1}})$) to be diagonal matrices.

\begin{prop}\label{prop:so_pq}
 If all $a_i\neq0$ and $S$ is invertible then $\Aa$ is a Lie algebra isomorphic to $\so(p,q)$ for some numbers $p,q$.
\end{prop}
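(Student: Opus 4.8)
The plan is to reduce this to the classical case $\so_T(n)$ for a symmetric invertible $T$ and then invoke the fact that such a Lie algebra is (under the trace pairing identification) isomorphic to the orthogonal Lie algebra of the bilinear form defined by $T^{-1}$. First I would apply Proposition \ref{prop:iso-d} to obtain the isomorphism $(\Aa,\cmm_S)\cong(\so_{S\delta^{-1}}(n),\cmm_{S\delta^{-1}})$; here $\delta$ is invertible because all $a_i\neq 0$, and $S\delta^{-1}$ is again invertible because $S$ is. So it suffices to prove the claim for $\so_T(n)$ with $T:=S\delta^{-1}$, which is a symmetric invertible matrix by \eqref{delta-S}.

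Next I would diagonalize. Writing $T$ as a real symmetric matrix, Sylvester's law of inertia provides an invertible matrix $C$ with $C^T T C=\eta$, where $\eta$ is diagonal with $p$ entries $+1$ and $q$ entries $-1$, $p+q=n$ (no zero entries since $T$ is invertible). Using Proposition \ref{prop:iso-S} — or rather its $\so$-analogue obtained by conjugating the bracket \eqref{ex-lb} — the change of variables $X\mapsto CXC^T$ (or $X\mapsto C^{-1}XC^{-T}$, whichever direction is needed) gives $(\so_T(n),\cmm_T)\cong(\so_\eta(n),\cmm_\eta)$; the verification is the one-line identity $C[X,Y]_{C^TTC}C^T=[CXC^T,CYC^T]_T$ together with the fact that $X\mapsto CXC^T$ sends $\so_{C^TTC}(n)$ to $\so_T(n)$. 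Thus we are reduced to $(\so_\eta(n),\cmm_\eta)$ for the standard diagonal form $\eta$.

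Finally I would identify $(\so_\eta(n),\cmm_\eta)$ with $\so(p,q)$. The map $X\mapsto \eta X$ is a linear isomorphism of $\so_\eta(n)$ onto the usual matrix Lie algebra $\{Z : Z^T\eta+\eta Z=0\}=\so(p,q)$: indeed the defining relation $\eta X+X^T\eta=0$ for $X\in\so_\eta(n)$ (which is \eqref{ex-lb}'s compatibility condition, the $S=\eta$ specialization of \eqref{delta-X} read for skew matrices) transforms, upon setting $Z=\eta X$, into $Z^T\eta+\eta Z=0$; and a direct computation gives $\eta[X,Y]_\eta=\eta(X\eta Y-Y\eta X)=(\eta X)(\eta Y)-(\eta Y)(\eta X)=[\eta X,\eta Y]$, so the bracket $\cmm_\eta$ goes over to the ordinary commutator. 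Composing the three isomorphisms yields $(\Aa,\cmm_S)\cong\so(p,q)$. The numbers $p,q$ are the inertia indices of the symmetric form $S\delta^{-1}$ — equivalently of $\delta S$, since $\delta S$ and $S\delta^{-1}=\delta^{-1}(\delta S)\delta^{-1}$ are congruent.

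The only genuinely delicate point is bookkeeping: making sure the conjugation in the middle step is applied in the correct direction so that the symmetric parameter is transformed by congruence $T\mapsto C^TTC$ (which is what Sylvester's theorem controls) rather than by similarity, and checking that the map $X\mapsto CXC^T$ indeed carries the deformed-skew space $\so_{C^TTC}(n)$ into $\so_T(n)$ — this is immediate from the characterization via $T X+X^T T=0$ but is worth stating. Everything else is the standard linear-algebra fact that the orthogonal Lie algebra of a nondegenerate real symmetric form depends only on its signature.
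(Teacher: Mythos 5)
Your proof is correct and follows essentially the same route as the paper: reduce to $\so_{S\delta^{-1}}(n)$ via Proposition \ref{prop:iso-d}, bring $S\delta^{-1}$ to a $\pm1$ diagonal form $\eta$ by Sylvester's law through the congruence isomorphism of Proposition \ref{prop:iso-S}, and identify $(\so_\eta(n),\cmm_\eta)$ with $\so(p,q)$ --- the paper's final invocation of Proposition \ref{prop:iso-d} is exactly your map $X\mapsto\eta X$. One small notational correction: in the paper's convention $\so_T(n)$ always denotes the plain skew-symmetric matrices (only the bracket depends on $T$), so the relation to start from in your last step is $X+X^T=0$, which under $Z=\eta X$ becomes $\eta Z+Z^T\eta=0$; the relation $\eta X+X^T\eta=0$ you quoted is not the defining condition of that space, though this does not affect the validity of the argument.
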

\begin{proof}
Composing the isomorphisms from Propositions \ref{prop:iso-d} and \ref{prop:iso-S} it follows that $\Aa$ is isomorphic to $\so_{C^TS\delta^{-1}C}(n)$ 
for any invertible matrix $C$. Since $S\delta^{-1}$ is invertible, by Sylvester's theorem we see that by proper choice of $C$ the matrix $C^TS\delta^{-1}C$  can be diagonalized with only $\pm1$ on the diagonal. Again using Proposition \ref{prop:iso-d} we observe that considered Lie algebra is indeed isomorphic with $\so(p,q)$.
\end{proof}

\begin{prop}\label{prop:semidirect}
 If $a_k=0$ for some $k$, then $\Aa$ is a semidirect product
of $\A_{a_1,\ldots,a_{k-1}}\times\A_{a_{k+1},\ldots,a_{n-1}}$ by
$Mat_{(n-k)\times k}(\R)$, where we consider the set of $(n-k)\times k$ matrices as an abelian Lie algebra and the action of $\A_{a_1,\ldots,a_{k-1}}\times\A_{a_{k+1},\ldots,a_{n-1}}$ on $Mat_{(n-k)\times k}(\R)$ is given by
$$(A,C)\cdot B := CB-BA.$$
\end{prop}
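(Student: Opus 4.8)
The plan is to realise $(\Aa,\cmm)$ (with the standard commutator $\cmm$) as a vector space direct sum of an abelian ideal and a complementary subalgebra, and then to read off the adjoint action of the latter on the former. Since $a_k=0$, the block description \eqref{block-tr} applies: every $X\in\Aa$ has the form $\left(\begin{smallmatrix}A&\0\\B&C\end{smallmatrix}\right)$ with $A\in\A_{a_1,\ldots,a_{k-1}}$ a $k\times k$ block, $C\in\A_{a_{k+1},\ldots,a_{n-1}}$ an $(n-k)\times(n-k)$ block and $B\in Mat_{(n-k)\times k}(\R)$ arbitrary. I would set $\mathfrak h:=\{X\in\Aa\mid A=C=0\}$ and $\mathfrak g:=\{X\in\Aa\mid B=0\}$, so that $\Aa=\mathfrak g\oplus\mathfrak h$ as vector spaces, with $\mathfrak h\cong Mat_{(n-k)\times k}(\R)$ and $\mathfrak g\cong\A_{a_1,\ldots,a_{k-1}}\times\A_{a_{k+1},\ldots,a_{n-1}}$ via the evident maps.

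The heart of the argument is three short block computations. First, $h_1h_2=0$ for $h_1,h_2\in\mathfrak h$, hence $[\mathfrak h,\mathfrak h]=0$ and $\mathfrak h$ is an abelian subalgebra. Second, the commutator of two block-diagonal matrices in $\mathfrak g$ is block-diagonal with diagonal blocks $[A_1,A_2]$ and $[C_1,C_2]$; combined with the fact — which one reads off from \eqref{delta-X2} — that the two diagonal blocks of a general element of $\mathfrak g$ range exactly and independently over $\A_{a_1,\ldots,a_{k-1}}$ and $\A_{a_{k+1},\ldots,a_{n-1}}$, this shows that $\mathfrak g$ is a subalgebra and that the identification above is an isomorphism onto the direct product Lie algebra. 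Third, the commutator of $\left(\begin{smallmatrix}A&\0\\\0&C\end{smallmatrix}\right)\in\mathfrak g$ with $\left(\begin{smallmatrix}\0&\0\\B&\0\end{smallmatrix}\right)\in\mathfrak h$ equals $\left(\begin{smallmatrix}\0&\0\\CB-BA&\0\end{smallmatrix}\right)$, which again lies in $\mathfrak h$; this shows at once that $\mathfrak h$ is an ideal and that, under the identifications, the adjoint action of $\mathfrak g$ on $\mathfrak h$ is precisely $(A,C)\cdot B=CB-BA$.

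Assembling these facts, $\Aa$ is the semidirect product $\bigl(\A_{a_1,\ldots,a_{k-1}}\times\A_{a_{k+1},\ldots,a_{n-1}}\bigr)\ltimes Mat_{(n-k)\times k}(\R)$ with the stated action; the one remaining check, that $(A,C)\mapsto(B\mapsto CB-BA)$ is a homomorphism of Lie algebras into $\mathfrak{gl}\bigl(Mat_{(n-k)\times k}(\R)\bigr)$, is either a one-line direct computation or a formal consequence of the Jacobi identity in $\Aa$. I do not expect a genuine obstacle here: everything reduces to multiplying $2\times2$ block matrices. The only points that call for a little care are the bookkeeping of block sizes — in particular the degenerate cases $k=1$ and $k=n-1$, where one factor is the trivial Lie algebra on $1\times1$ zero matrices — and, if one wishes to state the analogous result for a general bracket $\cmm_S$ rather than the standard commutator, first applying the isomorphisms of Section~3 (Proposition~\ref{prop:iso-S} and the block-diagonal variant following it) to normalise the diagonal blocks of $S$.
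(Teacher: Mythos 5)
Your proposal is correct and follows essentially the same route as the paper: both rest on the block form \eqref{block-tr} and a direct computation of the commutator of $2\times2$ block matrices, the paper simply doing it in one stroke for a general pair $(A,C,B)$, $(A',C',B')$ rather than separating the ideal, the subalgebra and the cross terms as you do. Your closing remarks on the trivial check that the action is a Lie algebra homomorphism and on the degenerate block sizes are fine and introduce no gap.
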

\begin{proof}
The equality of $\Aa$ and
$(\A_{a_1,\ldots,a_{k-1}}\times\A_{a_{k+1},\ldots,a_{n-1}})\ltimes Mat_{(n-k)\times k}(\R)$ as vector spaces follows from conditions \eqref{delta-X2} and block representation \eqref{block-tr}.

If we identify a matrix $X\in \Aa$ in the block form \eqref{block-tr} with a triple $(A,C,B)$ and calculate Lie bracket, we get
$$[(A,C,B),(A',C',B')]=\big([A,A'],[C,C'],BA'+CB'-B'A-C'B\big)=$$
$$=\big([A,A'],[C,C'],(A,C)\cdot B' -(A',C')\cdot B\big),$$
which is a Lie bracket for a semidirect product.
\end{proof}

\section{Lie--Poisson structure on dual space to $\Aa$}
The dual space to the Lie algebra $\Aa$ can be identified with the space of upper-triangular matrices
\be L_+:=\{\rho=(\rho_{ij})\in Mat_{n\times n}(\R)\;|\;
\rho_{ij}=0\textrm{ for }i\geq j\}\ee
using a pairing given by the trace
\be \pair\rho X=\Tr(\rho X),\ee
where $\rho\in L_+$ and $X\in\Aa$, see \cite{DO-L2}. Other realizations of $(\Aa)^*$ are also interesting but in this paper we restrict our attention to one choice.

The coadjoint representation of $\Aa$ on $L_+$ can be expressed in the following form
\be\label{coad} \ad^*_X\rho =\pi (\rho X S - SX\rho),\ee
where 
\be \pi (A)= \pi^+\big(A-\sum_{i=0}^N \iota(\delta_{k_i}) A^T \delta_{k_i})\big)\ee
and $\pi^+$ is a truncation to strictly upper-triangular matrix. Note that when $S$ is an invertible matrix,  we can express $\ad^*$ in the following form
\be  \ad^*_X\rho=\pi([\rho, S X S]_{S^{-1}}).\ee

The duality $(\Aa)^*\cong L_+$ defines a family of canonical Lie--Poisson brackets on $L_+$ indexed by $S\in\Sa$ in the following way
\be \label{pb-S}\{f,g\}_S(\rho)=\pair\rho{[Df(\rho),Dg(\rho)]_S},\ee
where $f,g\in C^\infty(L_+)$, $\rho\in L_+$. The derivative $Df(\rho)$ here is considered as an element of $\Aa$.

From the Lie bundle structure on $\Aa$ we obtain the following proposition. 

\begin{prop}
 The Poisson brackets $\pb_{S_1}$ and $\pb_{S_2}$ for $S_1,S_2\in\Sa$ are compatible, i.e. their linear combinations are again Poisson brackets on $L_+$. Moreover
\be \label{pencil}\{f,g\}_{S_1}+\lambda \{f,g\}_{S_2} = \{f,g\}_{S_1+\lambda S_2}\ee
for all $f,g\in C^\infty(L_+)$ and $\lambda\in\R$.
\end{prop}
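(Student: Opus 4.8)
The plan is to reduce the statement to the already-established linearity \eqref{Lb-lin} of the Lie bundle on $\Aa$, and then invoke the standard fact that a linear combination of compatible Lie--Poisson brackets coming from a Lie bundle is again Poisson. Concretely, the identity \eqref{pencil} follows immediately from the defining formula \eqref{pb-S}: for fixed $\rho$ and fixed $f,g$, the elements $Df(\rho),Dg(\rho)\in\Aa$ do not depend on the choice of $S$, so
\[
\{f,g\}_{S_1}(\rho)+\lambda\{f,g\}_{S_2}(\rho)
=\pair{\rho}{[Df(\rho),Dg(\rho)]_{S_1}}+\lambda\pair{\rho}{[Df(\rho),Dg(\rho)]_{S_2}},
\]
and by bilinearity of the pairing together with the linearity $\cmm_{S_1+\lambda S_2}=\cmm_{S_1}+\lambda\cmm_{S_2}$ — which is visible directly from \eqref{Sbracket}, since $X(S_1+\lambda S_2)Y-Y(S_1+\lambda S_2)X = (XS_1Y-YS_1X)+\lambda(XS_2Y-YS_2X)$ — this equals $\pair{\rho}{[Df(\rho),Dg(\rho)]_{S_1+\lambda S_2}}=\{f,g\}_{S_1+\lambda S_2}(\rho)$. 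One should note here that $S_1+\lambda S_2\in\Sa$ because $\Sa$ is a vector space, so the right-hand bracket is one of the brackets in our family and \eqref{pb-S} applies to it.

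Next I would address the word ``compatible''. Each $\pb_S$ is a genuine Poisson bracket (bilinear, antisymmetric, Leibniz, Jacobi) because $(\Aa,\cmm_S)$ is a Lie algebra — this is exactly the content of the earlier Proposition establishing that \eqref{Sbracket} defines a Lie algebra structure, and the Lie--Poisson construction on the dual of any Lie algebra is standard. Then \eqref{pencil} shows that any linear combination $\{\,\cdot\,,\,\cdot\,\}_{S_1}+\lambda\{\,\cdot\,,\,\cdot\,\}_{S_2}$ coincides with $\{\,\cdot\,,\,\cdot\,\}_{S_1+\lambda S_2}$, hence is itself the Lie--Poisson bracket attached to the Lie algebra $(\Aa,\cmm_{S_1+\lambda S_2})$, and in particular satisfies the Jacobi identity. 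This is precisely the assertion of compatibility, so no separate verification of Jacobi for the sum is needed.

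There is essentially no hard part: the proof is a one-line substitution into \eqref{pb-S} plus the observation that the bracket \eqref{Sbracket} is linear in $S$. The only point deserving a sentence of care is that the derivatives $Df(\rho)$ and $Dg(\rho)$, as elements of $(\Aa)^{*}{}^{*}\cong\Aa$ under the trace pairing, are computed independently of $S$, so the same vectors can be fed into all three brackets; once that is said, the computation above completes the argument. I would write the proof in four or five lines along exactly these lines.

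\begin{proof}
Fix $\rho\in L_+$ and $f,g\in C^\infty(L_+)$, and write $X=Df(\rho)$, $Y=Dg(\rho)$, regarded as elements of $\Aa$ via the pairing $\pair{\cdot}{\cdot}=\Tr(\cdot\,\cdot)$; these do not depend on the choice of $S$. Since $\Sa$ is a vector space, $S_1+\lambda S_2\in\Sa$, and directly from \eqref{Sbracket},
$$[X,Y]_{S_1+\lambda S_2}=X(S_1+\lambda S_2)Y-Y(S_1+\lambda S_2)X=[X,Y]_{S_1}+\lambda[X,Y]_{S_2}.$$
Pairing with $\rho$ and using bilinearity of the trace pairing together with \eqref{pb-S} gives
$$\{f,g\}_{S_1+\lambda S_2}(\rho)=\pair{\rho}{[X,Y]_{S_1}}+\lambda\pair{\rho}{[X,Y]_{S_2}}=\{f,g\}_{S_1}(\rho)+\lambda\{f,g\}_{S_2}(\rho),$$
which is \eqref{pencil}. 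Each $\pb_S$ is a Poisson bracket because $(\Aa,\cmm_S)$ is a Lie algebra, and \eqref{pencil} identifies the linear combination $\pb_{S_1}+\lambda\pb_{S_2}$ with $\pb_{S_1+\lambda S_2}$, hence it too is a Poisson bracket; this is the compatibility of $\pb_{S_1}$ and $\pb_{S_2}$.
\end{proof}
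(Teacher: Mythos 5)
Your proof is correct and follows essentially the same route as the paper's: both establish \eqref{pencil} by the linearity of \eqref{Sbracket} in $S$ together with bilinearity of the trace pairing in \eqref{pb-S}, and deduce compatibility by identifying the linear combination with the Lie--Poisson bracket of $(\Aa,\cmm_{S_1+\lambda S_2})$. Your added remarks (that $S_1+\lambda S_2\in\Sa$ and that $Df(\rho),Dg(\rho)$ are independent of $S$) are correct and, if anything, slightly more careful than the paper's version.
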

\begin{proof}
 The formula \eqref{pencil} follows directly from the definition of Lie--Poisson brackets \eqref{pb-S} and Lie bundle structure on $\Aa$ \eqref{Sbracket}:
$$
\{f,g\}_{S_1}(\rho)+\lambda \{f,g\}_{S_2}(\rho)=\Tr(\rho[Df,Dg]_{S_1})+\lambda\Tr(\rho[Df,Dg]_{S_2})=$$
$$=\Tr(\rho([Df,Dg]_{S_1}+\lambda [Df,Dg]_{S_2})=\Tr(\rho([Df,Dg]_{S_1+\lambda S_2})=\{f,g\}_{S_1+\lambda S_2}(\rho).
$$
\end{proof}

\begin{prop}
For the case $a_i\neq 0$ and $\det S\neq 0$ the functions
\be \label{casimirs}C_l(\rho)=\frac1{2l}\Tr((\rho-\delta^{-1}\rho^T\delta)S^{-1})^{2l}\ee
are global Casimirs for Lie--Poisson bracket \eqref{pb-S}.
\end{prop}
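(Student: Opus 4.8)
The plan is to show that $dC_l$ annihilates every Hamiltonian vector field, equivalently that $\ad^*_{DC_l(\rho)}\rho=0$ for all $\rho\in L_+$, using the form of the coadjoint action recorded in \eqref{coad}. First I would compute the differential $DC_l(\rho)\in\Aa$. Writing $M:=(\rho-\delta^{-1}\rho^T\delta)S^{-1}$, a short calculation with the trace and the cyclic property, together with Proposition~\ref{prop:4} (which guarantees that $A\mapsto A-\delta^{-1}A^T\delta$ lands in $\Aa$), should give $DC_l(\rho)=S^{-1}M^{2l-1}=S^{-1}((\rho-\delta^{-1}\rho^T\delta)S^{-1})^{2l-1}$ up to the factor absorbed by the $\tfrac1{2l}$; I would verify this membership in $\Aa$ both via Proposition~\ref{prop:2} (the element $S(XS)^{2k-1}$ type expressions stay in the right spaces, and here $M^{2l-1}$ is an odd power) and directly from $\delta$-relations \eqref{delta-X}--\eqref{delta-S}.

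Next I would substitute $X=DC_l(\rho)$ into \eqref{coad}. The key simplification is that for this particular $X$ the combination $\rho XS - SX\rho$ turns out to be $\delta$-symmetric in the sense that applying $\pi$ kills it. Concretely, with $X=S^{-1}M^{2l-1}$ one has $XS=S^{-1}M^{2l-1}S$ and $SX=M^{2l-1}$, so $\rho XS-SX\rho=\rho S^{-1}M^{2l-1}S-M^{2l-1}\rho$. I would then use $\delta\rho = \rho^T\delta + (\text{the antisymmetrized part})$ — more precisely the identity $\delta(\rho-\delta^{-1}\rho^T\delta)=2\,\delta\rho-2\rho^T\delta$ rearranged so that $M$ intertwines nicely with the transpose-through-$\delta$ operation — to show that $\delta(\rho XS-SX\rho)+ (\rho XS-SX\rho)^T\delta$ equals a commutator-type term that is manifestly in the kernel of $\pi$. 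The cleanest route is probably the alternative formula $\ad^*_X\rho=\pi([\rho,SXS]_{S^{-1}})$ valid for invertible $S$: here $SXS=M^{2l-1}S=(\text{odd power of }MS)\cdot S^{-1}\cdot S$-type expression, and $[\rho, M^{2l-1}S]_{S^{-1}} = \rho S^{-1}M^{2l-1}S - M^{2l-1}S S^{-1}\rho = \rho S^{-1}M^{2l-1}S - M^{2l-1}\rho$, matching the above. One then checks $\pi$ of this vanishes because, after conjugating by $\delta^{-1/2}$-type reasoning or by using $\delta S^{-1}=(S^{-1})^T\delta$ (Proposition~\ref{prop:5}) and $\delta\rho-\rho^T\delta = \delta M S + (MS)^T\delta\cdot(\text{correction})$, the bracket is sent by $\pi$'s antisymmetrization to zero.

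The main obstacle, and where I would spend the most care, is the bookkeeping in the last step: showing that $\pi(\rho XS-SX\rho)=0$ for $X=DC_l$. The conceptual reason is that $C_l$ is built as a trace of an even power of an element living in a ``$\so$-like'' subalgebra under the isomorphism $\rho\mapsto\rho-\delta^{-1}\rho^T\delta$ of Proposition~\ref{prop:iso-d}, and traces of powers of the coadjoint orbit element are the standard Casimirs for $\so_T(n)$-type Lie--Poisson structures; so the strategy is to transport the whole computation through Proposition~\ref{prop:iso-d} to $\so_{S\delta^{-1}}(n)$, where $\rho-\delta^{-1}\rho^T\delta \leftrightarrow$ the skew part and $C_l \leftrightarrow \tfrac1{2l}\Tr(\xi T)^{2l}$ with $\xi\in\so_T(n)$, $T=S\delta^{-1}$, and then invoke (or reprove in two lines) the classical fact that $\Tr(\xi T)^{2l}$ is Casimir for the bracket $\{f,g\}(\xi)=\pair\xi{[Df,Dg]_T}$ via $\ad^*_{T\xi^{2l-1}T}\,\xi$-type cancellation. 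I would present the intrinsic $\delta$-calculation as the actual proof but use this $\so_{S\delta^{-1}}(n)$ picture as the guiding principle to organize the cancellations.
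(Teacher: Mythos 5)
Your strategy coincides with the paper's: compute $DC_l(\rho)$ as an element of $\Aa$ (you obtain, correctly up to an irrelevant overall factor $2$, $S^{-1}((\rho-\delta^{-1}\rho^T\delta)S^{-1})^{2l-1}$, whose membership in $\Aa$ follows from Propositions \ref{prop:4}, \ref{prop:5} and \ref{prop:2} exactly as you indicate), and then show that the coadjoint action of this element kills $\rho$ because $\Sigma:=\rho\,DC_l\,S-S\,DC_l\,\rho$ has the $\delta$-symmetry of $\Sa$. The paper phrases the final step through the pairing rather than through $\pi$: it writes $\{C_l,f\}_S(\rho)=\Tr(\Sigma\,Df(\rho))$, proves $\Sigma\in\Sa$, and invokes Proposition \ref{prop:3} ($\Tr SX=0$ for $S\in\Sa$, $X\in\Aa$); this is equivalent to your observation that $\pi$ annihilates $\Sa$, since $\pi(\Sigma)=\pi^+(\Sigma-\delta^{-1}\Sigma^T\delta)=0$ whenever $\delta\Sigma=\Sigma^T\delta$.

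That said, the step you flag as ``the main obstacle'' is the entire content of the proof, and your sketch of it is not yet an argument; it also contains a sign slip, since the condition you need is $\delta\Sigma-\Sigma^T\delta=0$ (membership in $\Sa$), not that $\delta\Sigma+\Sigma^T\delta$ lies in the kernel of $\pi$. The paper closes this gap by a rearrangement you should adopt: setting $N=\rho-\delta^{-1}\rho^T\delta$ and $W=S^{-1}(NS^{-1})^{2l-2}$, which lies in $\Sa$ by Proposition \ref{prop:2}, the terms quadratic in $\rho$ cancel and $\Sigma=2\bigl(\delta^{-1}\rho^T\delta\,W\rho-\rho W\delta^{-1}\rho^T\delta\bigr)$; its $\Sa$-membership is then a two-line computation using only $\delta W=W^T\delta$. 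This is cleaner and more robust than the $\delta^{-1/2}$-conjugation heuristics you propose. Your alternative route through Proposition \ref{prop:iso-d} is viable as a guiding picture, but to make it a proof you must also identify the induced map on duals (using that the trace-annihilator of $\Aa$ in $Mat_{n\times n}(\R)$ is exactly $\Sa$, one finds $\rho\mapsto-\tfrac12(\rho\delta^{-1}-\delta^{-1}\rho^T)$ up to normalization) and check that $\pi$ corresponds to the skew-projection; note also that the classical Casimir for $\cmm_T$ is $\Tr(\xi T^{-1})^{2l}$, not $\Tr(\xi T)^{2l}$.
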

\begin{proof}
Let us calculate the derivative of the function $C_l$ from the definition:
$$C_l(\rho+\Delta\rho)-C_l(\rho)=\frac1{2l}\Tr(\rho S^{-1} - \delta^{-1}\rho^T\delta S^{-1} + \Delta\rho S^{-1} - \delta^{-1}\Delta\rho^T\delta S^{-1})^{2l} - $$
$$-\frac1{2l}\Tr(\rho S^{-1} - \delta^{-1}\rho^T\delta S^{-1})^{2l}.$$
Reordering the terms under the trace and taking into account only terms linear in $\Delta\rho$ we get
$$C_l(\rho+\Delta\rho)-C_l(\rho)=
\Tr\big( (\rho S^{-1} - \delta^{-1}\rho^T\delta S^{-1})^{2l-1}(\Delta\rho S^{-1}-\delta^{-1}\Delta\rho^T\delta S^{-1})\big)+O(\Delta\rho^2)=$$
$$=\Tr\big(S^{-1}(\rho S^{-1} - \delta^{-1}\rho^T\delta S^{-1})^{2l-1}\Delta\rho - \delta S^{-1}
(\rho S^{-1} - \delta^{-1}\rho^T\delta S^{-1})^{2l-1}\delta^{-1}\Delta\rho^T\big)+O(\Delta\rho^2).$$
Thus we conclude
$$ DC_l(\rho)=S^{-1}\big((\rho-\delta^{-1}\rho^T\delta)S^{-1}\big)^{2l-1}-
\delta^{-1}\big(S^{-1}\big((\rho-\delta^{-1}\rho^T\delta)S^{-1}\big)^{2l-1}\big)^T\delta.$$
From Proposition \ref{prop:4} it follows that $(\rho-\delta^{-1}\rho^T\delta)\in \Aa$. Subsequently from Proposition \ref{prop:2} and \ref{prop:5} we get that
$$ S^{-1}\big((\rho-\delta^{-1}\rho^T\delta)S^{-1}\big)^{2l-1}\in\Aa.$$
Again according to the Proposition \ref{prop:4} (and comment after it) we see that obtained expression can be simplified to
$$ DC_l(\rho)=2S^{-1}\big((\rho-\delta^{-1}\rho^T\delta)S^{-1}\big)^{2l-1}\in\Aa.$$

Let us now compute the Poisson bracket of $C_l$ with an arbitrary function $f\in C^\infty(L_+)$:
$$ \{C_l,f\}_S(\rho)=\Tr \big(\rho[DC_l(\rho),Df(\rho)]_S\big)=$$
$$=2Tr\bigg(\rho(S^{-1}(\rho-\delta^{-1}\rho^T\delta))^{2l-1}-((\rho-\delta^{-1}\rho^T\delta)S^{-1})^{2l-1}\rho\bigg)Df(\rho)=$$
$$=2\Tr\bigg(-\rho S^{-1}((\rho-\delta^{-1}\rho^T\delta)S^{-1})^{2l-2}\delta^{-1}\rho^T\delta+
\delta^{-1}\rho^T\delta S^{-1}((\rho-\delta^{-1}\rho^T\delta)S^{-1})^{2l-2}\rho\bigg)Df(\rho).$$

We will now check that
$$\delta^{-1}\rho^T\delta S^{-1}((\rho-\delta^{-1}\rho^T\delta)S^{-1})^{2l-2}\rho-
\rho S^{-1}((\rho-\delta^{-1}\rho^T\delta)S^{-1})^{2l-2}\delta^{-1}\rho^T\delta
\in\Sa.$$
To improve readability let us denote $S^{-1}(\rho-\delta^{-1}\rho^T\delta)S^{-1})^{2l-2}$ by $W$. Note that by Proposition \ref{prop:2} the matrix $W$ belongs to $\Sa$. Thus we have to check that
$$ \delta(\delta^{-1}\rho^T\delta W \rho - \rho W \delta^{-1}\rho^T\delta)-
(\delta^{-1}\rho^T\delta W \rho - \rho W \delta^{-1}\rho^T\delta)^T\delta=0.$$
It follows in a straightforward way:
$$ \rho^T\delta W\rho -\delta \rho W\delta^{-1}\rho^T\delta-\rho^TW^T\delta\rho+
\delta\rho \delta^{-1} W^T \rho^T\delta=$$
$$=\rho^T\delta W\rho-\delta \rho \delta^{-1} W^T\rho^T\delta-\rho^T\delta W\rho+
\delta\rho \delta^{-1} W^T \rho^T\delta=0.$$

Now using Proposition \ref{prop:3} we conclude that
$$ \{C_l,f\}_S(\rho)= 0$$
for all $f\in C^\infty(L_+)$.
\end{proof}

The formula \eqref{casimirs} has no sense in the case when one or more parameters $a_k$ are equal to zero. However by taking limit in a certain way we can recover some Casimirs from that formula. Namely let us consider functions 
\be \tilde C_l(\rho)=(a_1\cdots a_{n-1})^lC_l(\rho) = \frac1{2l}\Tr \big( a_1\cdots a_{n-1} (\rho S^{-1})^2 - \eta (S^{-1}\rho)^T\delta\rho S^{-1}-\ee
$$-\rho S^{-1}\eta (S^{-1}\rho)^T\delta  + \eta ((S^{-1}\rho)^T)^2\delta\big)^l,$$
where $\eta=a_1\cdots a_{n-1}\, \delta^{-1}$. Note that now each term on the right hand side makes sense in the case even when parameters tend to zero. Moreover in that limit the expression simplifies due to the fact that $\eta\delta\to0$. Thus we obtain
\be \label{casimir_zero}\tilde C_l(\rho)=\frac{(-2)^l}{2l}\Tr\big(\rho S^{-1}\eta(S^{-1}\rho)^T\delta\big)^l.\ee
Direct check confirms that $\tilde C_l$ are Casimirs in the considered case. In general however this family has less functionally independent Casimirs than in non-degenerate case due to the fact that both $\eta$ and $\delta$ have nontrivial kernels. Thus the common level set for those Casimirs may not yet be a symplectic leaf.

It is a well known fact (see e.g. \cite{magri,adler}) that Casimirs for some bracket from a nontrivial Poisson pencil (i.e. when brackets are not proportional) are functions in involution with respect to all other Poisson brackets from that pencil. We can fix one Lie--Poisson bracket $\pb_S$ and consider a hierarchy of Hamilton equations generated by Casimirs for Lie--Poisson bracket $\pb_{\tilde S}$:
\be \frac{d\rho_{ij}}{dt_l}=\{C_l,\rho_{ij}\}_S\ee
or alternatively
\be\label{ham-ad}\frac{d\rho}{dt_l}=\ad^*_{DC_l}\rho.\ee
The flows of equations of this hierarchy commute with each other and thus the equations have a rich set of integrals of motion in involution.

In our case, we can rewrite Casimirs \eqref{casimirs} for the Poisson pencil $\pb_{\tilde S}$, where $\tilde S=S_1+\lambda S_2$
\be \label{casimirs-l}C^{\tilde S}_l(\rho)=\frac1{2l}\Tr((\rho-\delta^{-1}\rho^T\delta)(S_1+\lambda S_2)^{-1})^{2l}.\ee
These Casimirs can be expanded into a power series with respect to $\lambda$. By using power series decomposition with respect to $\lambda$ we get
\be \label{casimir-dec}C^{\tilde S}_l(\rho)=C^{S_1}_l(\rho)+\lambda\Tr\big(((\rho-\delta^{-1}\rho^T\delta)S_1^{-1})^{2l}S_2S_1^{-1}\big)+\ldots.\ee
It is also known that the coefficients of that decomposition are additional integrals of motion in involution.

\section{Examples}

Let us consider $n=4$ case. Thus the considered Lie algebra
$\A_{a_1,a_2,a_3}$ depends only on 3 parameters $a_1$, $a_2$ and $a_3$.

For $S=\1$ the considered class of deformed Lie algebras contains several known classical Lie algebras for the certain choices of the deformation parameters:
\bit
\item $\so(4)$ for $a_1=a_2=a_3=1$;
\item $\so(3,1)$ for $a_1=-1$, $a_2=a_3=1$;
\item $\so(2,2)$ for $a_1=a_3=-1$, $a_2=1$;
\item Euclidean Lie algebra $\mathfrak e(3)$ for $a_1=0$, $a_2=a_3=1$;
\item Poincar\'e Lie algebra $\mathfrak p(1,2)$ for $a_1=0$, $a_2=-1$, $a_3=1$;
\item Galilean Lie algebra $\mathfrak g(2)$ for $a_1=a_2=0$, $a_3=1$;
\item semidirect product $(\so(2)\times\so(2))\ltimes Mat_{2\times2}(\R)$ for $a_1=a_3=1$, $a_2=0$ (see Proposition \ref{prop:semidirect}).
\eit

We will now consider two Lie brackets on $\A_{a_1,a_2,a_3}$ given by two choices of matrix from $\Sa$, 
Taking into account Proposition \ref{prop:iso-S} we restrict our considerations to diagonal $S$:
\be S=\operatorname{diag}(s_1^{-1},s_2^{-1},s_3^{-1},s_4^{-1}).\ee

Let us introduce the notation $\vec x=(\rho_{12},\rho_{13},\rho_{14})$, $\vec y=(\rho_{34},-\rho_{24},\rho_{23})$ and write the matrix $\rho\in L_+$ as follows
\be\rho=
\left(\begin{array}{cccc}
0 & x_1 & x_2 & x_3\\
0 & 0 & y_3 & -y_2\\
0 & 0 & 0 & y_1\\
0 & 0 & 0 & 0
\end{array}\right).
\ee
In order to calculate Lie--Poisson bracket \eqref{pb-S} let us observe that the derivative of a function $f\in C^\infty(L_+)$ assumes the form
\be
Df(\rho)=\left(
\begin{array}{ccccc}
0       & -a_{1}\frac{\partial f}{\partial x_1} & -a_{1}a_2\frac{\partial f}{\partial x_2} & -a_1a_2a_3\frac{\partial f}{\partial x_3} \\
\frac{\partial f}{\partial x_1}  & 0            & -a_{2}\frac{\partial f}{\partial y_3}  & a_2a_3\frac{\partial f}{\partial y_2}\\
\frac{\partial f}{\partial x_2} & \frac{\partial f}{\partial y_3} & 0 & -a_3 \frac{\partial f}{\partial y_1} \\
\frac{\partial f}{\partial x_3} & -\frac{\partial f}{\partial y_2} &\frac{\partial f}{\partial y_1} & 0
\end{array}\right).
\ee
By appropriate grouping of terms we can express the Lie--Poisson bracket \eqref{pb-S} in the following form
\be \label{pb:4}\{f,g\}_S(\rho)= (A\vec x)\cdot \left((\tilde S B \frac{\partial f}{\partial \vec x}) \times \frac{\partial g}{\partial \vec y} -(\tilde SB\frac{\partial g}{\partial \vec x}) \times \frac{\partial f}{\partial \vec y}\right)+\ee
$$+\frac{a_1}{s_1} (A\vec y) \cdot \left(\frac{\partial f}{\partial \vec x} \times \frac{\partial g}{\partial \vec x}\right)
+ (A\tilde SB\vec y) \cdot \left(\frac{\partial f}{\partial \vec y} \times \frac{\partial g}{\partial \vec y}\right),$$
where $A=\operatorname{diag}(a_2,1,1)$, $B=\operatorname{diag}(1,1,a_3)$, $\tilde S=\operatorname{diag}(s_2^{-1},s_3^{-1},s_4^{-1})$.

In this case the Casimirs defined by \eqref{casimirs} assume the following form:
\be C_1(\rho)=-\frac{1}{a_1a_2a_3} (s_1s_2a_2a_3\, x_1^2+s_1s_3a_3\,x_2^2+ s_1s_4\,x_3^2+\ee
$$+s_3s_4a_1a_2\,y_1^2+s_2s_4a_1\,y_2^2+s_2s_3a_1a_3\,y_3^2),$$
\be C_2(\rho)=2C_1(\rho)- \frac{4s_1s_2s_3s_4}{a_1a_2^2a_3}(x_2y_2+x_3y_3+a_2\,x_1y_1)^2.\ee
Note that they are functionally independent. Since $\dim L_+=6$, the generic symplectic leaf is 4-dimensional. Thus we need two functions in involution functionally independent with $C_1$ and $C_2$ (one as a Hamiltonian, the other as integral of motion) to obtain a completely integrable Hamiltonian system. 

Let us take as a Hamiltonian $H$ a function $C^W_1$ (up to a constant factor) computed for a second Poisson bracket $\pb_W$, where $W=\operatorname{diag}(w_1^{-1},w_2^{-1},w_3^{-1},w_4^{-1})$:
\be H(\rho)=\frac{1}{2} (w_1w_2a_2a_3\, x_1^2+w_1w_3a_3\,x_2^2+ w_1w_4\,x_3^2+\ee
$$+w_3w_4a_1a_2\,y_1^2+w_2w_4a_1\,y_2^2+w_2w_3a_1a_3\,y_3^2).$$
The equations of motion for this Hamiltonian assume the form
\be\label{xy-eq}
\left\{ \begin{array}{l}
 \frac{d}{dt}\vec x = \tilde\delta (\tilde W^{-1}\tilde S - \frac{w_1}{s_1}) \big((\tilde W^{-1} \vec x) \times (B\vec y)\big)\\
 \frac{d}{dt}\vec y = a_2 a_3 A^{-1} B^{-1}\big(w_1  \vec x \times (\tilde W^{-1}\tilde S \vec x) - a_1 \tilde W^{-1}(\vec y\times \tilde W^{-1}\tilde S \vec y)\big),
 \end{array}\right.\ee
where $\tilde\delta=\operatorname{diag}(a_1,a_1a_2,a_1a_2a_3)$, $\tilde W=\operatorname{diag}(w_2^{-1},w_3^{-1},w_4^{-1})$.
These equations can be rewritten in a matrix form in the following way (see \eqref{coad} and \eqref{ham-ad})
\be \frac{d}{dt}(\rho-\delta^{-1}\rho^T\delta)=a_1a_2a_3[W^{-1}S(\rho-\delta^{-1}\rho^T\delta)W^{-1}S,(\rho-\delta^{-1}\rho^T\delta)]_{S^{-1}}.\ee
It is an Euler-like equation on Lie algebra $\Aa$ (see \cite{arnold,manakov}) which can be viewed as a deformation of $6$-dimensional rigid body. Moreover, $3$-dimensional rigid body is contained in the system of equations \eqref{xy-eq} and can be obtained by putting $\vec x = 0$.

As an additional integral of motion one might take a first term of the decomposition of $C^{\tilde S}_1$ with respect to $\lambda$, see \eqref{casimir-dec} (up to the constant)
\be I(\rho)=(\frac{s_1}{w_1}+\frac{s_2}{w_2})s_1s_2a_2a_3 x_1^2+
(\frac{s_1}{w_1}+\frac{s_3}{w_3})s_1s_3a_3 x_2^2+
(\frac{s_1}{w_1}+\frac{s_4}{w_4})s_1s_4 x_3^2+\ee
$$+(\frac{s_3}{w_3}+\frac{s_4}{w_4})s_3s_4a_1a_2 y_1^2+
(\frac{s_2}{w_2}+\frac{s_4}{w_4})s_2s_4a_1 y_2^2+
(\frac{s_2}{w_2}+\frac{s_3}{w_3})s_2s_3a_1a_3 y_3^2.$$

Let us now write Hamilton equation for one of the contractions of $\Aa$. Note that from the form of Poisson bracket \eqref{pb:4} we see that putting $a_1=0$ is analogous to putting $1/s_1=0$ (see also isomorphism \eqref{iso:a1}). Let us now consider a Poisson pencil with $W=\operatorname{diag}(0,w_2^{-1},w_3^{-1},w_4^{-1})$ and take as a Hamiltonian a Casimir for $\pb_W$:
\be H(\rho)=\frac{1}{2} (w_2a_2a_3\, x_1^2+w_3a_3\,x_2^2+ w_4\,x_3^2).\ee
Hamilton equations in this case assume the form
\be
\left\{ \begin{array}{l}
 \frac{d}{dt}\vec x = - \frac{1}{s_1}\tilde\delta  \big((\tilde W^{-1} \vec x) \times (B\vec y)\big)\\
 \frac{d}{dt}\vec y = a_2 a_3 A^{-1} B^{-1}\big(\vec x \times (\tilde W^{-1}\tilde S \vec x)\big)
 \end{array}\right.\ee
and are an analogue of Clebsch system, see \cite{clebsch,komarov-tsiganov,yan,dragovic}.

\newcommand{\etalchar}[1]{$^{#1}$}

\end{document}